\documentclass[11pt,a4paper]{article}

\usepackage[a4paper,margin=25mm]{geometry}
\usepackage{amsmath,amssymb,amsthm}
\usepackage{graphicx}
\usepackage{booktabs}
\usepackage[super]{cite}
\usepackage{xcolor}
\usepackage{placeins}

\usepackage[verbose,hypertexnames=false]{hyperref}

\hypersetup{
  colorlinks=false,
  allbordercolors=blue,
  pdfborderstyle={/S/U/W 1}
}

\makeatletter
\newcommand{\address}[1]{%
  \g@addto@macro\@author{\\[1ex]\normalsize #1}%
}
\makeatother
\newcommand{\catchline}[5]{}
\newcommand{\refcite}[1]{\cite{#1}}
\newcommand{\keywords}[1]{%
  \par\medskip
  \noindent\textbf{Keywords:} #1\par\medskip
}
\newcommand{\tbl}[2]{%
  \centering
  \caption{#1}%
  \vspace{0.5em}%
  #2%
}
\newcommand{\colrule}{\midrule}
\newcommand{\botrule}{\bottomrule}
\date{}

\newtheorem{thm}{Theorem}

\newtheorem{myprop}{Proposition}[section]
\newtheorem{cor}{Corollary}[section]
\newtheorem{defn}{Definition}
\newtheorem{ass}{Assumption}
\newtheorem{rem}{Remark}

\newcommand{\R}{\mathbb R}
\newcommand{\C}{\mathbb C}
\newcommand{\Rew}{\operatorname{Re}}
\newcommand{\Imw}{\operatorname{Im}}
\newcommand{\supp}{\operatorname{supp}}
\newcommand{\rev}[1]{{\color{black}#1}}

\begin{document}

\markboth{M. Kaminaga}{Black--Hole Echo Spectra and Source Dependence}

%%%%%%%%%%%%%%%%%%%%% Publisher's Area please ignore %%%%%%%%%%%%%%%
\catchline{}{}{}{}{}
%%%%%%%%%%%%%%%%%%%%%%%%%%%%%%%%%%%%%%%%%%%%%%%%%%%%%%%%%%%%%%%%%%%%

\title{Black--Hole Echo Resonance Spectra and Source Dependence in a Controlled Transfer--Function Model}

\author{Masahiro Kaminaga}

\address{Tohoku Gakuin University\\
Faculty of Engineering, Department of Information Technology\\
3-1 Shimizukoji, Wakabayashi-ku, Sendai, Miyagi, Japan\\
E-mail: kaminaga@mail.tohoku-gakuin.ac.jp}

\maketitle

% The following history block is normally filled in by the publisher.
%\begin{history}
%\received{Day Month Year}
%\revised{Day Month Year}
%\accepted{Day Month Year}
%\published{Day Month Year}
%\end{history}

\begin{abstract}
\rev{Echo models phenomenologically encode possible near--horizon structure
by replacing the purely ingoing horizon--side condition with an
effective reflecting inner boundary near the would--be horizon.
We study this idea in a controlled transfer--function model consisting
of a compactly supported one--dimensional barrier and a Robin wall at
$x=-L$, where $L>0$ is the cavity length measured in the tortoise
coordinate.
The aim is not to propose a new echo mechanism or to make an
observational claim, but to analyze the standard cavity denominator
in a controlled model with explicit normalizations.
For this model, we prove a local one--zero--per--cell result with an
$O(L^{-2})$ localization error and derive a source--to--observer
factorization which separates the homogeneous poles from the
source--dependent residues.
To test the physical robustness of the mechanism, we also perform
direct numerical calculations for the untruncated axial
Regge--Wheeler potential.
The computed resonances form the same nearly equally spaced comb,
their deviations from the first asymptotic centers are numerically
consistent with $L^{-2}$ scaling, and smooth source profiles modify
the peak weights without changing the homogeneous pole locations.
The rigorous $O(L^{-2})$ theorem remains restricted to the compactly
supported model.}
\end{abstract}

\keywords{Black--hole echoes; gravitational waves; quasinormal modes; transfer functions; resonance spectra; exotic compact objects; source dependence.}

\section{Introduction}
\label{sec:introduction}

The ringdown of a compact object connects gravitational physics, wave propagation, and scattering resonances.
In the standard black--hole picture, late--time ringdown is described by quasinormal modes, namely resonances of a radial scattering problem with a purely ingoing condition at the horizon.
Early references include the works of Vishveshwara, Chandrasekhar and Detweiler, and Leaver; see Refs.~\refcite{Vishveshwara1970,ChandrasekharDetweiler1975,Leaver1985}.
Reviews are given by Kokkotas and Schmidt and by Berti, Cardoso, and Starinets; see Refs.~\refcite{KokkotasSchmidt1999,BertiCardosoStarinets2009}.
A recent account of black--hole spectroscopy is given in Ref.~\refcite{BertiEtAl2025}.

Echo models phenomenologically encode possible near--horizon structure by changing the horizon--side boundary condition.
In the classical black--hole model, the event horizon is not a reflecting surface, and after separation of variables this is represented by a purely ingoing condition on the horizon side.
In models motivated by exotic compact objects, the exterior geometry may remain close to that of a black hole down to a near--horizon region, while the spacetime need not contain a classical event horizon.
At the level of linear waves, such an inner region may be modeled by an effective partially reflecting boundary near the would--be horizon.

A basic motivation for this viewpoint was given by Cardoso, Franzin, and Pani, who emphasized that early ringdown mainly probes the light ring rather than the event horizon~\refcite{CardosoFranzinPani2016}.
The echo picture was then developed for exotic compact objects and possible near--horizon modifications in Ref.~\refcite{CardosoHopperMacedoPalenzuelaPani2016}; see also Ref.~\refcite{CardosoPani2019}.
A tentative observational claim was made in Ref.~\refcite{AbediDykaarAfshordi2017}, whereas a later analysis reported low statistical significance~\refcite{WesterweckEtAl2018}.
Here, echo models are used only as theoretical probes of a frequency--domain scattering problem, and no observational claim is made.

\rev{Echo--like signals are not restricted to horizonless compact objects
or to models with a reflecting surface outside the would--be horizon.
Dong and Stojkovic showed that, in ghost--free massive gravity, a
coupling between gravitational perturbations and scalar hair can
produce a characteristic double--peak effective potential and
gravitational--wave echoes for a black hole that still possesses an
event horizon
\refcite{DongStojkovic2021}.
This example illustrates that an echo--producing cavity should not be
identified uniquely with a horizonless object.
The Robin wall used below is therefore only a controlled realization
of repeated reflection and is not intended as an exhaustive
description of all possible echo mechanisms.}

After separation of variables, the relevant radial equations have the one--dimensional form
$$
\left(- {d^2\over dx^2} + V_\ell(x) - \omega^2\right)u(x)=0,
$$
where $x$ is a tortoise coordinate, the horizon corresponds to $x\to -\infty$, and spatial infinity corresponds to $x\to +\infty$.
The black--hole barrier is of Regge--Wheeler or Zerilli type; see Refs.~\refcite{ReggeWheeler1957,Zerilli1970PRL,Zerilli1970PRD,Chandrasekhar1983}.
For rotating black holes, the Teukolsky and Sasaki--Nakamura formulations are used; see Refs.~\refcite{Teukolsky1972,SasakiNakamura1982,SasakiTagoshi2003}.

\rev{In the rigorous part of this paper, we replace the effective radial
potential by a real compactly supported barrier.
This mathematical cutoff is not a new physical model of the
black--hole potential, but it gives a controlled transfer--function
model in which the scattering normalization, the cavity denominator,
and the long--cavity asymptotics can be treated explicitly.
All frequency windows are compact subsets of $(0,\infty)$, and all
rigorous $O(L^{-2})$ estimates are proved for this compactly supported
model.
Section~\ref{subsec:rw_numerics} gives a direct numerical test for the
untruncated axial Regge--Wheeler potential and determines which parts
of the compactly supported analysis remain quantitatively accurate
for a physically standard black--hole perturbation potential.}

\rev{The cavity interpretation also has earlier antecedents in studies
of ultracompact relativistic stars.
Long--lived axial quasi--normal modes associated with a potential well
inside the Regge--Wheeler barrier were studied by Chandrasekhar and
Ferrari and by Kokkotas
\refcite{ChandrasekharFerrari1991,Kokkotas1994}.
These modes were described as trapped or quasi--stationary states.
Their excitation by prescribed initial perturbations or scattered
particles, and the resulting frequency-- and time--domain responses,
were subsequently investigated in
Refs.~\refcite{Kokkotas1997,TominagaSaijoMaeda1999,FerrariKokkotas2000}.
In particular, periodic spectral peaks produced by waves reflected
between an inner region and an exterior Regge--Wheeler type barrier
were already found in this setting
\refcite{TominagaSaijoMaeda1999}.
The present paper does not claim the existence of this cavity spectrum
or its leading spacing as a new physical observation.}

The Green--function and transfer--function viewpoint is natural for quasinormal--mode excitation \refcite{Leaver1986,NollertPrice1999}.
It is also natural for echo models \refcite{MarkZimmermanDuChen2017,ConklinHoldomRen2018,ConklinHoldom2019}.
In particular, Conklin, Holdom, and Ren emphasized that the echo delay can be searched for through a nearly equally spaced resonance pattern in the frequency domain, rather than only through a detailed time--domain waveform template~\refcite{ConklinHoldomRen2018}.
Analytical echo models for spinning remnants were studied in Ref.~\refcite{MaggioTestaBhagwatPani2019}.
The present work starts from this frequency--domain resonance picture and gives a local analytic realization of it in a half--line scattering model.

\rev{The Fabry--Perot type denominator and the leading large--cavity
spacing are already standard features of trapped--mode and echo
models.
The contribution of the present paper is therefore not to introduce
this physical mechanism.
Our first contribution is to derive the normalized denominator from
the outgoing Jost solution and the Robin wall condition and to
identify precisely the nonvanishing prefactor which relates its zeros
to the resonances of the full half--line problem.
Our second contribution is a local one--zero--per--cell theorem with an
explicit $O(L^{-2})$ localization error in a regular complex frequency
window.
Our third contribution is a Wronskian factorization of the
source--to--observer response which shows at the residue level how a
homogeneous pole can be enhanced, suppressed, or removed for a
particular source.
Finally, we compare these conclusions with direct numerical
calculations for the untruncated Regge--Wheeler potential.}

The source factorization separates the homogeneous resonance spectrum from the spectral response to a particular excitation.
For the inhomogeneous problem we obtain $\psi_\omega=K_L(\omega)D_L(\omega)$, where $K_L$ contains the Wronskian transfer factor and $D_L$ is a source factor.
Thus the same homogeneous pole may give a prominent peak, a suppressed peak, or a canceled pole in a particular response.
Here, source dependence refers only to this residue--level effect in the frequency--domain Green--function response, not to astrophysical source modeling or detectability in gravitational--wave data.

We consider $P_L=-d^2/dx^2+V(x)$ on $[-L,\infty)$, where $V$ is a real compactly supported barrier.
At $x=-L$ we impose $u'(-L)=\gamma u(-L)$, $\gamma\in\mathbb R$, and at $+\infty$ we impose an outgoing condition.
In Schwarzschild coordinates, placing a wall at $x=-L$ corresponds to an effective inner boundary exponentially close to the would--be horizon.
Indeed, from
$$
r_*=r+2M\log\left({r\over 2M}-1\right),
$$
one obtains
$$
r_{\rm wall}
=
2M\left(1+C_* e^{-L/(2M)}(1+o(1))\right),
\qquad L\to\infty,
$$
where $C_*>0$ depends on the additive normalization of $r_*$.

Let $R(\omega)$ be the barrier reflection coefficient and let $\rho_\gamma(\omega)=(i\omega+\gamma)/(i\omega-\gamma)$ be the wall reflection coefficient.
In a regular window where the divided prefactor is nonzero, the homogeneous resonance problem reduces to
$$
1-\rho_\gamma(\omega)R(\omega)e^{2i\omega L}=0.
$$
Here $e^{2i\omega L}$ is the round--trip phase and $Q_\gamma(\omega)=\rho_\gamma(\omega)R(\omega)$ is the total round--trip reflection coefficient.
In a regular frequency window, where $Q_\gamma$ is holomorphic and has no zeros or poles in a complex neighborhood, we fix one branch of $\log Q_\gamma$.
If $x_n=\pi n/L$, the local zero satisfies
$$
\omega_n(L)=x_n+{i\over 2L}\log Q_\gamma(x_n)+O(L^{-2}).
$$
Consequently,
$$
\Rew\omega_{n+1}(L)-\Rew\omega_n(L)
={\pi\over L}+O(L^{-2}),
\qquad
\Imw\omega_n(L)
={1\over 2L}\log |Q_\gamma(x_n)|+O(L^{-2}).
$$
This is the local resonance comb in the long--cavity limit.
For a smoothly cut--off Regge--Wheeler or Zerilli barrier, the same proof applies with the corresponding reflection coefficient.
\rev{For untruncated black--hole potentials, tails change the scattering
phase and the analytic structure.
The $O(L^{-2})$ estimate is proved here only for the compactly
supported model, while the leading round--trip form is tested
numerically for the untruncated axial Regge--Wheeler potential in
Section~\ref{subsec:rw_numerics}.}
%
%Sections~\ref{sec:model}--\ref{sec:asymptotics} derive the resonance equation and prove the local long--cavity asymptotics.
%Sections~\ref{sec:response}--\ref{sec:discussion} discuss source dependence, echo delay, numerical illustrations, and limitations.

\section{The half--line model}
\label{sec:model}

Let $V$ be a real--valued potential on $\mathbb R$.
Throughout this paper we work with a compactly supported potential.
This assumption is stronger than is needed for the physical motivation,
but it makes the mechanism clear.
\begin{ass}
\label{assump:potential}
For some $a>0$, the potential $V$ satisfies
$$
V\in C_0^\infty(\R), \qquad \supp V\subset [0,a].
$$
\end{ass}
The word ``barrier'' refers to the intended physical situation and to the
nonnegative examples used in the numerical section. 
The analytic statements below do not require a sign condition such as $V\geq 0$.
The smoothness assumption is only a convenient sufficient regularity condition.
What is essential here is that $V$ is real and compactly supported: 
outside $[0,a]$ the equation is free, so the Jost solution has plane--wave
representations and the reflection coefficient is meromorphic in $\omega$ and is therefore
holomorphic in a complex neighborhood of each regular window, 
after the possible poles have been excluded.

For $L>0$ and $\gamma\in\R$, define
$$
P_L=-{d^2\over dx^2}+V(x)
$$
on $[-L,\infty)$ with the Robin boundary condition at $x=-L$,
\begin{equation}
\label{eq:robin_boundary}
u'(-L)=\gamma u(-L).
\end{equation}
We call $L$ the cavity length and $\gamma$ the wall parameter.
Figure~\ref{fig:model_schematic} illustrates the geometry of the model.

\begin{figure}[htbp]
\centering
\includegraphics[width=1.00\linewidth]{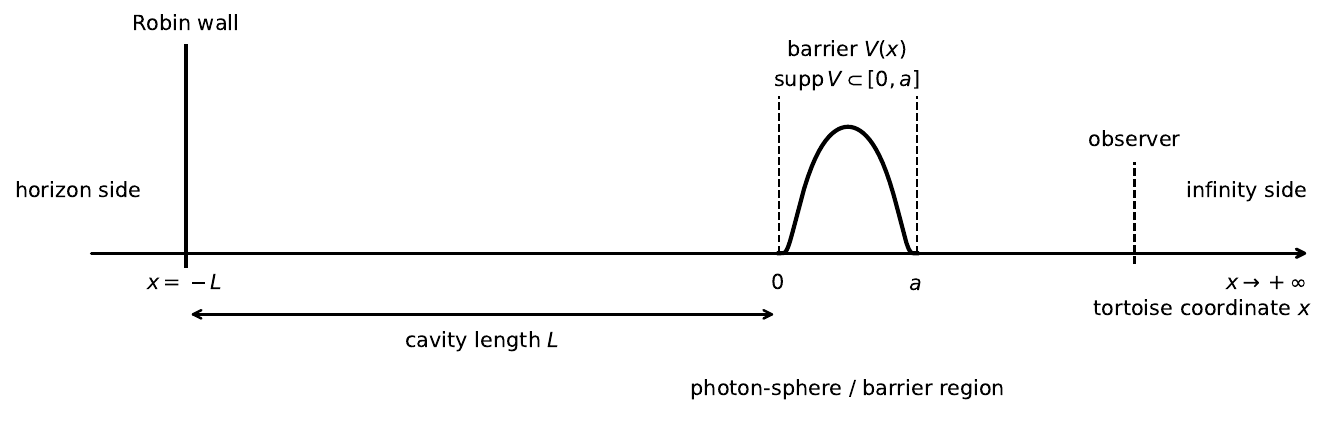}
\caption{
Schematic illustration of the half--line echo model in the tortoise
coordinate. The Robin wall at $x=-L$ and the compactly supported
exterior barrier $V$ form a cavity.
}
\label{fig:model_schematic}
\end{figure}

\begin{defn}
\label{defn:resonance}
A complex number $\omega\neq 0$ is called a resonance of $P_L$ if there exists
a nonzero solution $u$ of
\begin{equation}
\label{eq:eigen_equation}
\left(
-{d^2\over dx^2}+V(x)-\omega^2
\right)u=0
\end{equation}
on $[-L,\infty)$ such that $u$ satisfies \eqref{eq:robin_boundary} and
has the outgoing form
$$
u(x)=Ce^{i\omega x},
        \qquad x>a,
$$
with some constant $C\neq 0$.
\end{defn}
We shall use this definition only away from the threshold $\omega=0$.
With the time dependence $e^{-i\omega t}$, a mode with
$\Imw\omega<0$ decays in time, whereas a mode with $\Imw\omega>0$ grows.
For real $\omega>0$ and real $\gamma$, one has $|\rho_\gamma(\omega)|=1$. 
Hence, if $0<|R(\omega)|<1$, the asymptotic formula below puts the corresponding
 long--cavity resonances in the lower half--plane for large $L$.

\begin{rem}
\label{rem:physical_meaning}
The wall at $x=-L$ models a reflecting inner boundary near the
would--be horizon, while the potential near $x=0$ models the exterior
light--ring barrier, for example a Regge--Wheeler type barrier.
The model is not intended to be a full compact--object model. Rather,
it is a controlled half--line model which isolates the cavity resonance
mechanism and makes the transfer--function analysis explicit.
\end{rem}

\section{Jost solutions and the resonance equation}
\label{sec:jost}

Throughout this section, and in the asymptotic theorem below, we work at nonzero frequency. 
More precisely, the regular frequency windows used in Section~\ref{sec:asymptotics} 
are bounded away from $\omega=0$. 
At $\omega=0$, the two plane waves $e^{i\omega x}$ and
$e^{-i\omega x}$ both reduce to the constant solution $1$.
Therefore, we do not use the formulas for $A(\omega)$ and $B(\omega)$,
which involve division by $\omega$.
Threshold behavior at $\omega=0$ would require a separate analysis.

Let $f_+(x,\omega)$ be the right outgoing Jost solution, defined by
\begin{equation}
\label{eq:jost_equation}
\left(
-{d^2\over dx^2}+V(x)-\omega^2
\right)f_+(x,\omega)=0
\end{equation}
and
\begin{equation}
\label{eq:jost_outgoing}
f_+(x,\omega)=e^{i\omega x},\qquad x>a.
\end{equation}
Since $V=0$ on $(-\infty,0)$, we can write, for $x<0$,
\begin{equation}
\label{eq:jost_left_form}
f_+(x,\omega)=A(\omega)e^{i\omega x}+B(\omega)e^{-i\omega x}.
\end{equation}
For $\omega\neq 0$, the coefficients are determined by the values at
$x=0$:
\begin{eqnarray*}
A(\omega)+B(\omega)
&=&
f_+(0,\omega),
\\
i\omega A(\omega)-i\omega B(\omega)
&=&
f_+'(0,\omega).
\end{eqnarray*}
The initial data defining $f_+$ at $x=a$ and the coefficients of the
differential equation depend holomorphically on $\omega$.
Hence, by the standard analytic dependence theorem for ordinary differential
equations, $f_+(0,\omega)$ and $f_+'(0,\omega)$ are entire functions of $\omega$. 
Solving the above system gives
$$
A(\omega)
=
{1\over 2}
\left\{
f_+(0,\omega)+{f_+'(0,\omega)\over i\omega}
\right\},
\qquad
B(\omega)
=
{1\over 2}
\left\{
f_+(0,\omega)-{f_+'(0,\omega)\over i\omega}
\right\}.
$$
Thus, $A(\omega)$ and $B(\omega)$ are holomorphic away from $\omega=0$.
We define the reflection coefficient by
$$
R(\omega)={B(\omega)\over A(\omega)}
$$
at points where $A(\omega)\neq 0$.
This defines a meromorphic function of $\omega$ away from $\omega=0$.

For real $\omega>0$, the quotient $B(\omega)/A(\omega)$ has the standard scattering meaning.
With the time dependence $e^{-i\omega t}$, the factor $e^{i\omega x}$ in
the left free region is right--moving, while $e^{-i\omega x}$ is left--moving. 
Dividing the outgoing Jost solution by $A(\omega)$ gives
$$
{f_+(x,\omega)\over A(\omega)}=
\begin{cases}
 e^{i\omega x}+{B(\omega)\over A(\omega)}e^{-i\omega x},
        & x<0,\\
 {1\over A(\omega)}e^{i\omega x},
        & x>a.
\end{cases}
$$
Thus, $R(\omega)=B(\omega)/A(\omega)$ is the usual reflection amplitude
for incidence from the left, and $T(\omega)=1/A(\omega)$ is the
corresponding transmission amplitude.

Since $V$ is real, the Wronskian current
$$
J[u](x)=\operatorname{Im}\{\overline{u(x)}u'(x)\}
$$
is conserved for real $\omega$. 
In the scattering normalization this
current is the one--dimensional flux: the wave $e^{i\omega x}$ carries
flux $+\omega$, while $e^{-i\omega x}$ carries flux $-\omega$.
Evaluating this current for $u=f_+$ on the two free regions gives
$$
\omega=\omega\{|A(\omega)|^2-|B(\omega)|^2\}.
$$
Hence, we have
$$
|A(\omega)|^2-|B(\omega)|^2=1,
        \qquad
|T(\omega)|^2+|R(\omega)|^2=1.
$$
In particular, $A(\omega)\neq0$ and $|R(\omega)|<1$ for every real $\omega>0$. 
Also, for real $\gamma$ and real $\omega>0$, one has $|\rho_\gamma(\omega)|=1$.
Therefore, on the real axis in a regular window, $0<|Q_\gamma(\omega)|=|R(\omega)|<1$.

Since any right outgoing solution is a constant multiple of $f_+$,
the resonance condition is obtained by imposing the Robin condition on $f_+$ at $x=-L$.
Substituting \eqref{eq:jost_left_form} into
$$
f_+'(-L,\omega)-\gamma f_+(-L,\omega)=0,
$$
we obtain
\begin{eqnarray}
0
&=&
(i\omega-\gamma)A(\omega)e^{-i\omega L}
-
(i\omega+\gamma)B(\omega)e^{i\omega L}.
\label{eq:boundary_substitution}
\end{eqnarray}
Thus, when $A(\omega)\neq0$ and $i\omega-\gamma\neq0$, the resonance
condition is
\begin{equation}
\label{eq:resonance_equation}
1-\rho_\gamma(\omega)R(\omega)e^{2i\omega L}=0,
\end{equation}
where
\begin{equation}
\label{eq:wall_reflection}
\rho_\gamma(\omega)
=
{i\omega+\gamma\over i\omega-\gamma}.
\end{equation}
For real $\omega>0$ and real $\gamma$, this coefficient has unit modulus.
The special case $\gamma=0$ is the Neumann wall and gives
$\rho_\gamma=1$, while the formal Dirichlet limit
$|\gamma|\to\infty$ gives $\rho_\gamma\to -1$.
The undivided equation \eqref{eq:boundary_substitution} is the basic boundary form of the resonance condition.
In a region where $A(\omega)\neq0$ and $i\omega-\gamma\neq0$, division by $(i\omega-\gamma)A(\omega)e^{-i\omega L}$ gives \eqref{eq:resonance_equation}.
Conversely, multiplying \eqref{eq:resonance_equation} by this nonzero factor gives back \eqref{eq:boundary_substitution}.
Thus the normalized denominator is a convenient representation only in windows where the divided prefactor is holomorphic and nonzero.
If one of the divided factors vanishes, the undivided equation should be used instead.
This distinction will be used in Section~\ref{sec:asymptotics}.
There the local zero theorem is first stated for the normalized denominator, and the identification with full resonances is made under the nonvanishing condition for the prefactor.

\section{Asymptotic location of resonances}
\label{sec:asymptotics}

We study the zeros of the normalized resonance denominator in a fixed real frequency interval.
The argument is local.
Thus we do not need global information on the reflection coefficient.

With the notation $Q_\gamma(\omega)=\rho_\gamma(\omega)R(\omega)$, the normalized resonance equation is
\begin{equation}
\label{eq:resonance_equation_Q}
1-Q_\gamma(\omega)e^{2i\omega L}=0.
\end{equation}
We fix a regular frequency window as follows.
Choose numbers $\alpha_-,\alpha,\beta,\beta_+$ such that $0<\alpha_-<\alpha<\beta<\beta_+$,
and put $I=[\alpha,\beta]$.
The relation between the window $I$, the larger real interval $[\alpha_-,\beta_+]$, and a complex neighborhood $U$ is shown schematically in Fig.~\ref{fig:regular_frequency_window}.
The zeros will be constructed for the real centers
$$
x_n={\pi n\over L}
$$
lying in $I$.
The larger interval $[\alpha_-,\beta_+]$ is not the window in which the centers are chosen.
It only gives a margin for the assumption that the round--trip coefficient has a holomorphic nonzero continuation to a complex neighborhood $U$.

\begin{figure}[t]
\centering
\includegraphics[width=0.82\linewidth]{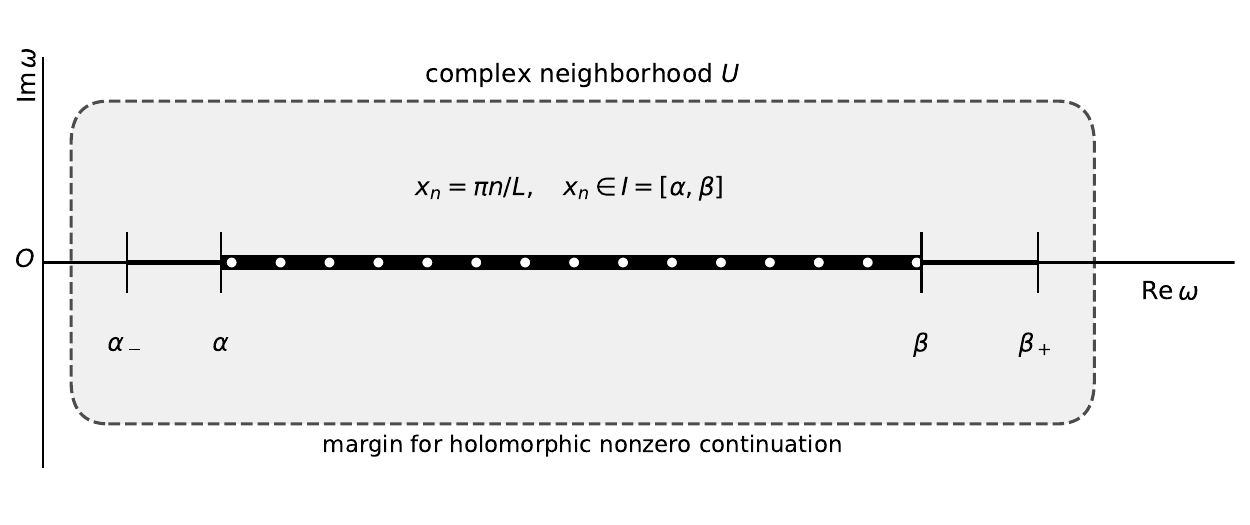}
\caption{
Schematic regular frequency window.
The centers $x_n=\pi n/L$ are chosen in $I=[\alpha,\beta]$, while the larger interval $[\alpha_-,\beta_+]$ gives a margin for a complex neighborhood $U$ where $Q_\gamma$ is holomorphic and nonzero.
}
\label{fig:regular_frequency_window}
\end{figure}

\begin{ass}
\label{assump:Q}
There exists a simply connected complex neighborhood $U$ of the larger interval $[\alpha_-,\beta_+]$ such that $Q_\gamma(\omega)$ is holomorphic and nonzero in $U$.
\end{ass}

Assumption~\ref{assump:Q} is the regularity assumption for the normalized denominator.
It means that the window is chosen away from the threshold $\omega=0$ and away from zeros and poles of the normalized round--trip coefficient $Q_\gamma$.
In particular, it excludes zeros of $R(\omega)$, poles of $R(\omega)$ coming from zeros of $A(\omega)$, and possible zeros and poles of $\rho_\gamma(\omega)$, unless a cancellation occurs in the product.
Thus the theorem below is a local theorem for the normalized equation \eqref{eq:resonance_equation_Q}.
Windows containing such exceptional points should be treated by the undivided boundary equation.

When we identify the zeros of \eqref{eq:resonance_equation_Q} with resonances of the full half--line problem, we also use the following nonvanishing condition:
\begin{equation}
\label{eq:full_prefactor_condition}
A(\omega)\neq 0,
\qquad
i\omega-\gamma\neq 0,
\qquad
\omega\in U.
\end{equation}
This condition is not needed for the proof of Theorem~\ref{thm:main_asymptotics}.
It is needed only when the normalized denominator is compared with the undivided boundary factor.
In the notation of Section~\ref{sec:jost}, that factor is
\begin{equation}
\label{eq:full_boundary_factor_Wtilde}
\widetilde W_L(\omega)
=
(i\omega-\gamma)A(\omega)e^{-i\omega L}
-
(i\omega+\gamma)B(\omega)e^{i\omega L}.
\end{equation}
Under \eqref{eq:full_prefactor_condition}, we have
\begin{equation}
\label{eq:Wtilde_factorization}
\widetilde W_L(\omega)
=
(i\omega-\gamma)A(\omega)e^{-i\omega L}
\left\{
1-Q_\gamma(\omega)e^{2i\omega L}
\right\}
\end{equation}
in $U$.
Since the prefactor in \eqref{eq:Wtilde_factorization} is holomorphic and nonzero in $U$, the zeros of $\widetilde W_L$ and the zeros of the normalized denominator have the same multiplicities in this window.
This is the precise sense in which the normalized equation gives the local resonances of the full half--line problem.

Since $U$ is simply connected and $Q_\gamma$ has no zeros in $U$, there is a holomorphic branch of $\log Q_\gamma(\omega)$ in $U$.
We fix one such branch and write
\begin{equation}
\label{eq:q_def}
q(\omega)=\log Q_\gamma(\omega).
\end{equation}
For real $x\in I$, write
\begin{equation}
\label{eq:a_b_def}
q(x)=a(x)+ib(x),
\end{equation}
where $a(x)=\log |Q_\gamma(x)|$ and $b(x)$ is the argument determined by the chosen branch of $q$.
With this branch, the normalized resonance equation is written locally as
\begin{equation}
\label{eq:log_resonance_equation}
q(\omega)+2i\omega L=2\pi i n,
\qquad n\in\mathbb Z.
\end{equation}
The branch of $q$ is fixed before the integer labels are assigned.
A different branch changes $q$ by $2\pi i m$ and therefore shifts the integer label by $m$.
This is only a relabelling of the same local family, as made explicit in Remark~\ref{rem:branch_choice}.

The next theorem gives the asymptotic location of the local family of zeros associated with $I$.
It is a local theorem.
The uniqueness statement is inside the small disks specified below, not in a global complex strip.

\begin{thm}
\label{thm:main_asymptotics}
Assume Assumption~\ref{assump:Q}.
Then there exist positive constants $L_0$ and $C$, depending on $U$, $I$, and the chosen branch of $q$, but not on $L$ or $n$, such that the following statement holds.

Let $L\geq L_0$ and let $n\in\mathbb Z$ satisfy
$$
x_n={\pi n\over L}\in I.
$$
Then equation \eqref{eq:resonance_equation_Q} has exactly one zero, counted with multiplicity, in the disk
\begin{equation}
\label{eq:disk}
\left|
\omega
-
x_n
-
{i\over 2L}q(x_n)
\right|
\leq
{C\over L^2}.
\end{equation}
This zero is simple.
We denote it by $\omega_n(L)$.
Consequently,
\begin{eqnarray}
\left|
\Rew \omega_n(L)
-
x_n
+
{b(x_n)\over 2L}
\right|
&\leq&
{C\over L^2},
\label{eq:real_part_asymptotics}
\\
\left|
\Imw \omega_n(L)
-
{a(x_n)\over 2L}
\right|
&\leq&
{C\over L^2}.
\label{eq:imag_part_asymptotics}
\end{eqnarray}
If $n$ and $n+1$ satisfy
$$
{\pi n\over L}\in I,
\qquad
{\pi(n+1)\over L}\in I,
$$
then
\begin{equation}
\label{eq:spacing_asymptotics}
\left|
\Rew \omega_{n+1}(L)
-
\Rew \omega_n(L)
-
{\pi\over L}
\right|
\leq
{C\over L^2}.
\end{equation}
The uniqueness asserted here is uniqueness inside the disk \eqref{eq:disk}.
No claim is made that these zeros give all zeros of the denominator in a larger complex strip.
\end{thm}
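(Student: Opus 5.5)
The plan is to work with the logarithmic form \eqref{eq:log_resonance_equation} and locate its zeros by Rouché's theorem on small disks. First I check that \eqref{eq:resonance_equation_Q} and \eqref{eq:log_resonance_equation} have the same zeros, with the same multiplicities, inside $U$. Since $Q_\gamma=e^{q}$, we have $1-Q_\gamma e^{2i\omega L}=1-e^{q(\omega)+2i\omega L}$. The zeros of $1-e^{w}$ are exactly $w=2\pi i n$, and they are simple. Put
$$
F_n(\omega)=q(\omega)+2i\omega L-2\pi i n .
$$
Then near a zero of $F_n$ the function $1-e^{F_n}$ equals $-F_n\,(1+O(F_n))$, so the two functions have the same local zero multiplicities.

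Next I fix the geometry. Since $I=[\alpha,\beta]$ sits inside the open set $U\supset[\alpha_-,\beta_+]$, there is a $\delta>0$, depending on $U$ and $I$, such that the closed $\delta$-neighborhood $K$ of $I$ is contained in $U$. On the compact set $K$, the function $q$ has bounded derivatives: $M_0=\sup_K|q|$, $M_1=\sup_K|q'|$ and $M_2=\sup_K|q''|$. These constants depend only on $U$, $I$ and the chosen branch. The intended approximate zero is
$$
\zeta_n=x_n+\frac{i}{2L}q(x_n).
$$
Because $|\zeta_n-x_n|\le M_0/(2L)$, we have $\zeta_n\in K$ for $L$ large.

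Then I expand $F_n$ around $x_n$. Using $2i x_n L=2\pi i n$,
$$
F_n(\omega)=2iL(\omega-\zeta_n)+\bigl[q(\omega)-q(x_n)\bigr].
$$
Compare $F_n$ with the linear function $G(\omega)=2iL(\omega-\zeta_n)$ on the circle $|\omega-\zeta_n|=C/L^2$. On this circle $|G|=2C/L$. Writing $q(\omega)-q(x_n)=q'(x_n)(\omega-x_n)+O\bigl(M_2|\omega-x_n|^2\bigr)$ with $|\omega-x_n|\le M_0/(2L)+C/L^2$, I get
$$
|F_n-G|\le \frac{M_1M_0}{2L}+\frac{M_1 C}{L^2}+O(L^{-2}).
$$
This is of order $1/L$, the same order as $|G|$. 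So Rouché only goes through if $C>M_0M_1/4$ plus a margin. In that regime the disk radius is really a constant times $1/L^2$ only because $\zeta_n$ absorbs the leading term.

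At this point I see the main obstacle. The first-order term $q'(x_n)(\omega-x_n)\approx q'(x_n)\frac{i}{2L}q(x_n)$ shifts the true zero by about $\frac{i}{2L}\cdot\frac{i}{2L}q'(x_n)q(x_n)$. That shift is itself $O(L^{-2})$, so it fits inside a disk of radius $C/L^2$ with $C$ large. The error estimate therefore has to be done more carefully than above. The remedy is to write
$$
q(\omega)-q(x_n)=q'(x_n)(\omega-\zeta_n)+q'(x_n)(\zeta_n-x_n)+O(L^{-2}).
$$
The middle term is $\frac{i}{2L}q'(x_n)q(x_n)$, which is $O(1/L)$ in $F_n$. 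Equivalently, it moves the zero by $O(L^{-2})$. Hence on the circle $|\omega-\zeta_n|=C/L^2$:
$$
|F_n-G|\le \frac{M_1C}{L^2}+\frac{M_0M_1}{2L}+\frac{c_2}{L^2},
\qquad |G|=\frac{2C}{L}.
$$
Choosing $C>M_0M_1/2$, for instance $C=M_0M_1+1$, and then $L_0$ large gives $|F_n-G|<|G|$ on the circle. Rouché's theorem then shows that $F_n$ has exactly one zero in the disk, since $G$ has one. That zero is simple, because $F_n'=2iL+q'$ does not vanish for $L>M_1/2$. By the first step, this is the unique zero $\omega_n(L)$ of \eqref{eq:resonance_equation_Q} in the disk \eqref{eq:disk}. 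All constants are independent of $n$ and $L$.

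Finally I read off the componentwise estimates. Writing $q(x_n)=a(x_n)+ib(x_n)$ gives $\frac{i}{2L}q(x_n)=-\frac{b(x_n)}{2L}+i\frac{a(x_n)}{2L}$, and \eqref{eq:disk} immediately yields \eqref{eq:real_part_asymptotics} and \eqref{eq:imag_part_asymptotics}. For the spacing \eqref{eq:spacing_asymptotics}, I subtract the two real-part estimates. Since $b$ is smooth on $I$,
$$
|b(x_{n+1})-b(x_n)|\le \sup_I|b'|\,\frac{\pi}{L},
$$
so the difference of the $b/(2L)$ terms is $O(L^{-2})$. Enlarging $C$ if necessary absorbs this into a single constant.
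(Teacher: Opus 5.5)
Your Rouch\'e argument is sound and takes a genuinely different route from the paper. The paper rescales $\omega=x_n+z/L$ and writes the $n$-th logarithmic equation as the fixed-point problem $z=\frac{i}{2}q(x_n+z/L)$. It then applies the contraction principle on the ball $|z|\le M$, with Lipschitz constant $M/(2L)$. The bound $|z_n-F(0)|\le \frac{M}{2L}|z_n|$ gives the localization $M^2/(2L^2)$ directly, and simplicity is checked separately from $H_L'=-(q'+2iL)$ at a zero. You instead compare $F_n$ with its linear part $2iL(\omega-\zeta_n)$ on the circle $|\omega-\zeta_n|=C/L^2$. This counts zeros with multiplicity in one step, so simplicity is automatic, and it works for every $C>M_0M_1/4$ once $L\ge L_0(C)$. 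The contraction argument, in exchange, is constructive ($\zeta_n$ is its first Picard iterate from $z=0$). It also gives uniqueness for the fixed label in the larger ball $|\omega-x_n|\le M/L$. Your ``obstacle'' paragraph is unnecessary: your first bound $|F_n-G|\le M_0M_1/(2L)+M_1C/L^2$ already closes Rouch\'e, and $M_2$ is never needed.

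One step is missing. Rouch\'e counts zeros of $F_n$ for the single label $n$, whereas $1-Q_\gamma(\omega)e^{2i\omega L}$ vanishes wherever $q(\omega)+2i\omega L\in 2\pi i\mathbb Z$. To claim exactly one zero of the normalized denominator in the disk, you must also rule out zeros of $F_k$, $k\neq n$, there, and your first step does not do this. The fix is one line. On the closed disk, $|F_n|\le 2C/L+M_0M_1/(2L)+M_1C/L^2<2\pi$ for large $L$, so $F_k=F_n-2\pi i(k-n)$ cannot vanish for $k\neq n$. Equivalently, $1-e^{F_n}$ vanishes there only where $F_n$ does, with the same multiplicity. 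The paper performs exactly this check when it shows that the label of any zero in the disk must be $n$.

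Likewise, ``enlarging $C$'' in the spacing step needs a justification. Rouch\'e at a fixed radius $C_1/L^2$ already places $\omega_n(L)$ in the smaller disk. The spacing error is then at most $(2C_1+\pi M_1/2)/L^2$, using $|b'|\le M_1$ on $I$. Taking $C=2C_1+\pi M_1/2$ keeps existence, uniqueness in the larger disk, and all three estimates valid with one constant.
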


\begin{proof}
Since $U$ is simply connected and $Q_\gamma$ is holomorphic and nonzero in $U$, the chosen branch $q=\log Q_\gamma$ is holomorphic in $U$.
Choose $r_1>0$ such that
$$
\{\omega\in\C:\operatorname{dist}(\omega,I)\leq r_1\}
\subset U.
$$
Here, $\operatorname{dist}(\omega,I)=\inf_{x\in I}|\omega-x|$ is the Euclidean distance from $\omega$ to the interval $I$.
Then there is a constant $M\geq1$ such that
\begin{equation}
\label{eq:q_bound}
|q(\omega)|+|q'(\omega)|\leq M
\end{equation}
on this neighborhood.
Choose $C\geq1$ such that
$$
C\geq M^2+{M\pi\over2}.
$$
The constant $L_0$ will be chosen sufficiently large in what follows.
Fix $L\geq L_0$ and an integer $n$ such that
$$
x_n={\pi n\over L}\in I.
$$
We look for a zero in an $O(L^{-1})$ neighborhood of $x_n$, and therefore write
$$
\omega=x_n+{z\over L}.
$$
Since $2iLx_n=2\pi i n$, the logarithmic equation
$$
q(\omega)+2iL\omega=2\pi i n
$$
is equivalent to
$$
q\left(x_n+{z\over L}\right)+2iz=0.
$$
Thus, it is equivalent to $z=F(z)$, where
\begin{equation}
\label{eq:F_def}
F(z)
=
{i\over2}
q\left(x_n+{z\over L}\right).
\end{equation}
Let
$$
B=\{z\in\C: |z|\leq M\}.
$$
For $L_0$ sufficiently large, $x_n+z/L$ lies in the above $r_1$--neighborhood for all $n$ with $x_n\in I$ and all $z\in B$.
Hence
$$
|F(z)|\leq {M\over2}\leq M,
\qquad z\in B,
$$
and therefore $F(B)\subset B$.
Moreover, for $z,w\in B$, the line segment joining $z$ and $w$ is contained in $B$.
Since
$$
F'(z)
=
{i\over 2L}
q'\left(x_n+{z\over L}\right),
$$
\eqref{eq:q_bound} gives
$$
|F'(z)|\leq {M\over 2L},
\qquad z\in B.
$$
Thus, we have
$$
|F(z)-F(w)|
\leq
{M\over2L}|z-w|,
\qquad z,w\in B.
$$
We increase $L_0$, if necessary, so that $M<2L_0$.
Then, for every $L\geq L_0$, the map $F$ is a contraction on $B$.
Hence there is a unique $z_n(L)\in B$ satisfying $z_n(L)=F(z_n(L))$.

Define
\begin{equation}
\label{eq:omega_from_z}
\omega_n(L)=x_n+{z_n(L)\over L}.
\end{equation}
The fixed point equation gives
$$
q(\omega_n(L))+2iL(\omega_n(L)-x_n)=0.
$$
Since $2iLx_n=2\pi i n$, we have
$$
q(\omega_n(L))+2iL\omega_n(L)=2\pi i n.
$$
Therefore
$$
Q_\gamma(\omega_n(L))e^{2iL\omega_n(L)}=1.
$$
Thus $\omega_n(L)$ is a zero of the denominator in \eqref{eq:resonance_equation_Q}.

We next compare this zero with the first approximation.
Since
$$
F(0)={i\over2}q(x_n),
$$
the contraction estimate gives
$$
\left|
z_n(L)-{i\over2}q(x_n)
\right|
=
|F(z_n(L))-F(0)|
\leq
{M\over2L}|z_n(L)|
\leq
{M^2\over2L}.
$$
Thus
\begin{equation}
\label{eq:omega_expansion_sharp}
\left|
\omega_n(L)
-
x_n
-
{i\over 2L}q(x_n)
\right|
\leq
{M^2\over2L^2}
\leq
{C\over L^2}.
\end{equation}
This proves the asserted first approximation and also shows that $\omega_n(L)$ belongs to the disk \eqref{eq:disk}.

We now prove uniqueness in the disk.
Let $\omega$ be any zero of \eqref{eq:resonance_equation_Q} satisfying \eqref{eq:disk}.
For $L_0$ sufficiently large, this disk is contained in the $r_1$--neighborhood of $I$, and hence in $U$.
Therefore the fixed branch $q=\log Q_\gamma$ is defined at $\omega$, and $e^{q(\omega)}=Q_\gamma(\omega)$.
Since $\omega$ is a zero of \eqref{eq:resonance_equation_Q}, we have $Q_\gamma(\omega)e^{2iL\omega}=1$.
Thus $q(\omega)+2iL\omega=2\pi i k$ for some $k\in\mathbb Z$.
Put $\omega_n^{(0)}=x_n+iq(x_n)/(2L)$.
Since $2iL\omega_n^{(0)}=2\pi i n-q(x_n)$, we obtain
\begin{eqnarray}
\label{eq:uniqueness_integer_identity}
q(\omega)+2iL\omega-2\pi i n
&=&
\{q(\omega)-q(x_n)\}
+
2iL(\omega-\omega_n^{(0)}).
\end{eqnarray}
For $L_0$ sufficiently large, the line segment joining $x_n$ and $\omega$ is contained in the $r_1$--neighborhood of $I$.
Therefore
\begin{eqnarray*}
|q(\omega)-q(x_n)|
&\leq&
M|\omega-x_n|
\\
&\leq&
M\left\{
{1\over 2L}|q(x_n)|+{C\over L^2}
\right\}
\\
&\leq&
{M^2\over2L}+{MC\over L^2}.
\end{eqnarray*}
Also, since $|\omega-\omega_n^{(0)}|\leq C/L^2$,
$$
2L|\omega-\omega_n^{(0)}|\leq {2C\over L}.
$$
It follows from \eqref{eq:uniqueness_integer_identity} that
$$
\left|q(\omega)+2iL\omega-2\pi i n\right|
\leq
{M^2\over2L}+{MC\over L^2}+{2C\over L}.
$$
The right hand side is smaller than $\pi$ for $L_0$ sufficiently large.
Since $q(\omega)+2iL\omega=2\pi i k$, we have $2\pi |k-n|<\pi$.
Therefore $k=n$.

Set $z=L(\omega-x_n)$.
Then the equation with $k=n$ gives
$$
q\left(x_n+{z\over L}\right)+2iz=0,
$$
or equivalently $z=F(z)$.
Since $x_n+z/L=\omega$
belongs to the $r_1$--neighborhood of $I$, the bound \eqref{eq:q_bound} gives
$$
|z|
=
{1\over2}
\left|q\left(x_n+{z\over L}\right)\right|
\leq
{M\over2}
\leq M.
$$
Hence $z\in B$, and the uniqueness of the fixed point of $F$ gives $z=z_n(L)$.
Therefore $\omega=\omega_n(L)$.
The zero is simple.
Indeed, if
$$
H_L(\omega)=1-Q_\gamma(\omega)e^{2iL\omega},
$$
then at a zero of $H_L$,
\begin{eqnarray*}
H_L'(\omega)
&=&
-
\left\{
Q_\gamma'(\omega)+2iLQ_\gamma(\omega)
\right\}
e^{2iL\omega}
\\
&=&
-\left\{
{Q_\gamma'(\omega)\over Q_\gamma(\omega)}
+
2iL
\right\}
\\
&=&
-\{q'(\omega)+2iL\}.
\end{eqnarray*}
At $\omega=\omega_n(L)$, we have $|q'(\omega_n(L))|\leq M$.
Hence $H_L'(\omega_n(L))\neq0$ for $L_0$ sufficiently large.
By \eqref{eq:a_b_def}, we have $q(x)=a(x)+ib(x)$ for $x\in I$.
Taking real and imaginary parts in \eqref{eq:omega_expansion_sharp}, we obtain \eqref{eq:real_part_asymptotics} and \eqref{eq:imag_part_asymptotics}.
Finally, since $|b'(x)|\leq M$ on $I$, \eqref{eq:omega_expansion_sharp} gives
\begin{eqnarray*}
\left|
\Rew \omega_{n+1}(L)
-
\Rew \omega_n(L)
-
{\pi\over L}
\right|
&\leq&
{M^2\over L^2}
+
{M\over2L}|x_{n+1}-x_n|
\\
&=&
{M^2\over L^2}
+
{M\pi\over2L^2}\leq{C\over L^2}.
\end{eqnarray*}
This proves \eqref{eq:spacing_asymptotics}.
\end{proof}

\begin{cor}
\label{cor:full_resonance_window}
Assume Assumption~\ref{assump:Q} and the additional condition \eqref{eq:full_prefactor_condition}.
Then the zeros constructed in Theorem~\ref{thm:main_asymptotics} are exactly the local zeros of the full boundary factor $\widetilde W_L$ in the disks \eqref{eq:disk}.
In particular, they are local resonances of the half--line problem in this window.
They are simple zeros of $\widetilde W_L$.
\end{cor}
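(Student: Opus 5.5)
The corollary follows from Theorem~\ref{thm:main_asymptotics} together with the factorization \eqref{eq:Wtilde_factorization}. The key observation is that, under Assumption~\ref{assump:Q} and \eqref{eq:full_prefactor_condition}, the prefactor
$$
G_L(\omega)=(i\omega-\gamma)A(\omega)e^{-i\omega L}
$$
is holomorphic and nonzero throughout $U$. First, $A$ is holomorphic away from $\omega=0$, and $U$ avoids the threshold since it is a neighborhood of $[\alpha_-,\beta_+]\subset(0,\infty)$; if necessary, shrink $U$ slightly while keeping it a neighborhood of the interval. Second, $A(\omega)\neq0$ and $i\omega-\gamma\neq0$ by hypothesis. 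Third, the exponential never vanishes. Hence, in $U$, $\widetilde W_L=G_L H_L$ with $H_L(\omega)=1-Q_\gamma(\omega)e^{2iL\omega}$, and the zero sets of $\widetilde W_L$ and $H_L$ in $U$ coincide, with equal multiplicities.

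Next, I would take $L_0$ as in Theorem~\ref{thm:main_asymptotics}; the proof there already ensures that each disk \eqref{eq:disk} lies in the $r_1$-neighborhood of $I$, hence in $U$. Inside such a disk, the theorem says $H_L$ has exactly one zero counted with multiplicity, namely $\omega_n(L)$, and that it is simple. Transferring this through $\widetilde W_L=G_L H_L$ shows that $\omega_n(L)$ is the unique zero of $\widetilde W_L$ in the disk. Simplicity also transfers directly: $\widetilde W_L'(\omega_n)=G_L(\omega_n)H_L'(\omega_n)\neq0$, since $H_L(\omega_n)=0$, $G_L(\omega_n)\neq 0$, and $H_L'(\omega_n)\neq0$.

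Finally, I would identify these zeros with resonances in the sense of Definition~\ref{defn:resonance}. By the derivation of \eqref{eq:boundary_substitution}, $\widetilde W_L(\omega)=f_+'(-L,\omega)-\gamma f_+(-L,\omega)$. So at a zero, $u=f_+(\cdot,\omega)$ satisfies the Robin condition \eqref{eq:robin_boundary}, is nonzero, and equals $e^{i\omega x}$ for $x>a$, i.e.\ it has the outgoing form with $C=1$. Also $\omega\neq0$ because $\omega\in U$. Conversely, any outgoing solution is a multiple of $f_+$, so resonances in the disk are exactly the zeros of $\widetilde W_L$ there.

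The only point needing care is the containment of the disks in $U$, where the factorization holds. That containment was already established inside the proof of the theorem, so there is no genuine obstacle.
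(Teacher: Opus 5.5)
Your proposal is correct and follows the paper's own proof. The factorization \eqref{eq:Wtilde_factorization}, with a prefactor $(i\omega-\gamma)A(\omega)e^{-i\omega L}$ that is holomorphic and nonzero on $U$ (hence on each disk \eqref{eq:disk} for $L\geq L_0$), transfers existence, uniqueness, and simplicity of the zero from $H_L$ to $\widetilde W_L$, and the identity $\widetilde W_L(\omega)=f_+'(-L,\omega)-\gamma f_+(-L,\omega)$ then identifies these zeros with resonances; your product-rule computation of $\widetilde W_L'(\omega_n(L))$ and your explicit check against Definition~\ref{defn:resonance} just spell out steps the paper states in one line each.
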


\begin{proof}
By \eqref{eq:Wtilde_factorization}, the full boundary factor is the product of the normalized denominator and the prefactor
$$
(i\omega-\gamma)A(\omega)e^{-i\omega L}.
$$
Under \eqref{eq:full_prefactor_condition}, this prefactor is holomorphic and nonzero in $U$.
Therefore the zeros of $\widetilde W_L$ and the zeros of the normalized denominator have the same multiplicities in the window.
Theorem~\ref{thm:main_asymptotics} gives exactly one simple zero of the normalized denominator in each disk \eqref{eq:disk}.
Hence the same statement holds for $\widetilde W_L$.
By the derivation of the resonance condition in Section~\ref{sec:jost}, the zeros of $\widetilde W_L$ are the resonances of the half--line problem.
\end{proof}

\begin{rem}
Theorem~\ref{thm:main_asymptotics} is a theorem for the normalized equation \eqref{eq:resonance_equation_Q}.
Corollary~\ref{cor:full_resonance_window} is the corresponding statement for the full half--line resonance problem.
This separation is useful because the normalized coefficient $Q_\gamma$ may be regular even when a divided factor in the undivided boundary equation is not suitable for normalization.
Near such a point one should return to the undivided factor $\widetilde W_L$.
\end{rem}

\begin{rem}
The theorem is local in frequency.
For each admissible $x_n=\pi n/L\in I$, it constructs one simple zero in an $O(L^{-2})$ neighborhood of the first approximation
$$
\omega_n^{(0)}=x_n+{i\over2L}q(x_n).
$$
It does not claim that these zeros give all zeros of $1-Q_\gamma(\omega)e^{2i\omega L}$ in any larger complex neighborhood of $I$.
It also does not give a global description of all resonances of the half--line problem.
Zeros and poles of $Q_\gamma$ are excluded.
Near such points the branch $q=\log Q_\gamma$ is not available, and a separate local analysis is needed.
The natural object for such an analysis is the undivided boundary factor $\widetilde W_L$.
\end{rem}

\begin{rem}
\label{rem:branch_choice}
The choice of the branch of $q=\log Q_\gamma$ only affects the integer labelling of the local family.
If the branch is changed to $q+2\pi i m$, with $m\in\mathbb Z$, then $\omega_n^{(0)}$ is shifted by $-\pi m/L$ in its real part.
This corresponds to replacing the label $n$ by $n-m$, up to the $O(L^{-2})$ error already included in the asymptotic formula.
\end{rem}

\begin{rem}
\label{rem:decay_rate}
For real $\omega>0$ and real $\gamma$, one has $|\rho_\gamma(\omega)|=1$.
Thus, on the real axis, $|Q_\gamma(\omega)|=|R(\omega)|$.
By the imaginary--part estimate in Theorem~\ref{thm:main_asymptotics},
$$
\Imw \omega_n(L)
=
{1\over 2L}\log |Q_\gamma(x_n)|+O(L^{-2}).
$$
Hence, if $|R(x)|\leq 1-\delta$ on $I$ for some $\delta>0$, then the zeros constructed in Theorem~\ref{thm:main_asymptotics} lie in the lower half--plane for all sufficiently large $L$.
Their imaginary parts are bounded above by $-c/L$ for some $c>0$.
Under the additional condition \eqref{eq:full_prefactor_condition}, the same statement applies to the corresponding resonances of the full half--line problem.
\end{rem}

\begin{cor}
\label{cor:general_wall}
Let $\rho_{\rm wall}(\omega)$ be a prescribed wall reflection coefficient.
Suppose that, for the same $\alpha_-,\alpha,\beta,\beta_+$, there exists a simply connected complex neighborhood $U_{\rm wall}$ of $[\alpha_-,\beta_+]$ such that
$$
Q_{\rm wall}(\omega)=\rho_{\rm wall}(\omega)R(\omega)
$$
is holomorphic and nonzero in $U_{\rm wall}$.
Then the conclusions of Theorem~\ref{thm:main_asymptotics} hold for the equation
$$
1-Q_{\rm wall}(\omega)e^{2i\omega L}=0
$$
with $Q_\gamma$ and $q=\log Q_\gamma$ replaced by $Q_{\rm wall}$ and $q_{\rm wall}=\log Q_{\rm wall}$.
\end{cor}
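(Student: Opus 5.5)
The plan is to observe that the proof of Theorem~\ref{thm:main_asymptotics} never uses the specific form $Q_\gamma=\rho_\gamma R$. It uses only three facts: that the round--trip coefficient is holomorphic and nonzero on a simply connected neighborhood of $[\alpha_-,\beta_+]$, that a holomorphic branch of its logarithm therefore exists, and that $q$ and $q'$ are bounded on a closed $r_1$--neighborhood of $I$. So I would prove Corollary~\ref{cor:general_wall} by rerunning that argument with $Q_\gamma$ replaced by $Q_{\rm wall}$ and $U$ replaced by $U_{\rm wall}$, checking at each step that only these structural facts are invoked.

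The steps, in order, are as follows.

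\emph{Step 1.} Since $U_{\rm wall}$ is simply connected and $Q_{\rm wall}$ has no zeros there, fix a holomorphic branch $q_{\rm wall}=\log Q_{\rm wall}$ on $U_{\rm wall}$. Write $q_{\rm wall}(x)=a(x)+ib(x)$ on $I$, with $a=\log|Q_{\rm wall}|$.

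\emph{Step 2.} Choose $r_1>0$ so that the closed $r_1$--neighborhood of $I$ lies in $U_{\rm wall}$. This is possible because $I$ is compact and $U_{\rm wall}$ is open and contains $[\alpha_-,\beta_+]\supset I$. By compactness, obtain $M\geq1$ with $|q_{\rm wall}|+|q_{\rm wall}'|\leq M$ there.

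\emph{Step 3.} With $\omega=x_n+z/L$, rewrite the equation $1-Q_{\rm wall}e^{2i\omega L}=0$ on the branch labelled $n$ as the fixed-point problem $z=F(z)=\frac{i}{2}q_{\rm wall}(x_n+z/L)$. Show that $F$ maps the ball $|z|\leq M$ into itself and is a contraction with Lipschitz constant $M/(2L)$ once $L\geq L_0$.

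\emph{Step 4.} Repeat the remaining parts of the proof verbatim: the $O(L^{-2})$ comparison with $F(0)$, uniqueness in the disk (forcing $k=n$ by the estimate $2\pi|k-n|<\pi$), simplicity via $H_L'=-(q_{\rm wall}'+2iL)\neq0$, and the real-part, imaginary-part and spacing estimates.

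No new input is needed beyond the original proof. For instance, the identity $H_L'=-(q'+2iL)$ at a zero only uses $Q_{\rm wall}'/Q_{\rm wall}=q_{\rm wall}'$ together with $Q_{\rm wall}e^{2iL\omega}=1$. The constants $L_0$ and $C$ will now depend on $U_{\rm wall}$, $I$ and the branch of $q_{\rm wall}$, but again not on $L$ or $n$.

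The only point I expect to need care, though it is hardly an obstacle, is to confirm that nothing in Theorem~\ref{thm:main_asymptotics} secretly relied on real-axis properties special to the Robin wall. Two candidates come to mind: $|\rho_\gamma|=1$, and the flux identity $|R|<1$. Inspecting the argument, these enter only the interpretive Remark~\ref{rem:decay_rate} about the sign of $\Imw\omega_n$, not the theorem itself. So the corollary holds for an arbitrary wall coefficient, including absorptive ones with $|\rho_{\rm wall}|\neq1$. I would end the proof by noting this, and by remarking that identifying these zeros with resonances of a full problem would require an analogue of \eqref{eq:full_prefactor_condition}, which the corollary does not claim.
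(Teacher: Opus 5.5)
Your proposal is correct and takes the same route as the paper. The paper's proof simply observes that the contraction argument of Theorem~\ref{thm:main_asymptotics} needs only a holomorphic branch of the logarithm of the round--trip coefficient on a simply connected regular neighborhood of $I$, which exists for $Q_{\rm wall}$ by hypothesis; your step-by-step rerun, including the check that $|\rho_\gamma|=1$ and $|R|<1$ enter only Remark~\ref{rem:decay_rate}, is a faithful expansion of that observation.
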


\begin{proof}
The proof of Theorem~\ref{thm:main_asymptotics} uses only the existence of a holomorphic branch of the logarithm of the round--trip coefficient in a regular complex neighborhood of $I$.
Since $Q_{\rm wall}$ is holomorphic and nonzero in the simply connected neighborhood $U_{\rm wall}$, such a branch exists for $Q_{\rm wall}$.
The same contraction argument therefore applies with $Q_{\rm wall}$ in place of $Q_\gamma$.
\end{proof}

\begin{rem}
Corollary~\ref{cor:general_wall} is a local frequency--domain transfer--function statement.
A prescribed $\rho_{\rm wall}$ may or may not arise from a local boundary condition at a wall.
If $|\rho_{\rm wall}(x)|<1$ on the real axis, then the leading imaginary part contains the additional damping term
$$
{1\over 2L}\log |\rho_{\rm wall}(x)|.
$$
Thus perfect Robin reflectivity is not essential for the local spacing statement.
Physical restrictions on a prescribed wall reflectivity, such as passivity or causality, are separate questions.
They are not needed for the local resonance--comb theorem proved here.
\end{rem}

%%%%

\section{Source--to--observer response and source dependence}
\label{sec:response}

The preceding sections concern the homogeneous resonances.
For an inhomogeneous excitation, the observed frequency--domain response
also contains a source factor.
We recall the Wronskian form of the source--to--observer response:
$$
\psi_\omega=K_L(\omega)D_L(\omega),
$$
where $K_L$ is the full transfer factor and $D_L$ is determined by the source.

\rev{Source--dependent excitation of relativistic stellar modes has
been studied using both frequency--domain and time--domain perturbation
equations
\refcite{Kokkotas1997,TominagaSaijoMaeda1999,FerrariKokkotas2000}.
Those works use physically specified perturbations or particle
trajectories and calculate the resulting waveforms or energy spectra.
The present factor $D_L(\omega)$ should be understood as an abstract
Wronskian realization of the same general distinction between the
homogeneous mode spectrum and its excitation.
We do not attempt to reproduce a particular particle trajectory or
astrophysical initial perturbation.
Instead, we isolate the analytic dependence of each pole residue on a
compactly supported source.}

\rev{A recent concrete example of source--dependent excitation is the
plunge spectrum toward a black--hole mimicker studied by Nair
\refcite{Nair2026}.
In that setting, the low--frequency energy spectrum contains a comb of
sharp resonances associated with the real parts of the mimicker
quasinormal frequencies, while a further qualitative departure from
the black--hole spectrum appears above a threshold frequency.
The plunge source thus provides a physical setting in which cavity
resonances are selectively excited and become visible in the observed
spectrum.
In the notation of the present paper, this is analogous to a
physically specified source factor weighting the poles of the
homogeneous transfer function.
We do not claim a quantitative identification between the two models,
but this example supports the general distinction between homogeneous
pole locations and source--dependent spectral weights.}

Let $\varphi_L(x,\omega)$ be the solution of the homogeneous equation
\begin{equation}
\label{eq:left_solution}
\left(
-{d^2\over dx^2}+V(x)-\omega^2
\right)\varphi_L(x,\omega)=0
\end{equation}
satisfying $\varphi_L(-L,\omega)=1$ and
$\varphi_L'(-L,\omega)=\gamma$.
Let $f_+(x,\omega)$ be the outgoing Jost solution on the right.
We define
\begin{equation}
\label{eq:W_def}
W_L(\omega)
=
\varphi_L(x,\omega)f_+'(x,\omega)
-
\varphi_L'(x,\omega)f_+(x,\omega).
\end{equation}
This Wronskian is independent of $x$.
By evaluating it at $x=-L$, we get
\begin{equation}
\label{eq:W_boundary}
W_L(\omega)=f_+'(-L,\omega)-\gamma f_+(-L,\omega).
\end{equation}
Thus, the zeros of $W_L$ are precisely the resonances.
The boundary factor $\widetilde W_L$ used in Section~\ref{sec:asymptotics} is the same function as $W_L$.
In a region where $A(\omega)\neq0$ and $i\omega-\gamma\neq0$,
\eqref{eq:boundary_substitution} gives
\begin{equation}
\label{eq:W_factor}
W_L(\omega)
=
(i\omega-\gamma)A(\omega)e^{-i\omega L}
\left\{
1-\rho_\gamma(\omega)R(\omega)e^{2i\omega L}
\right\}.
\end{equation}
%%%
It is useful to distinguish the full transfer factor from the normalized cavity part.
We define
\begin{equation}
\label{eq:K_cav_def}
K_L^{\rm cav}(\omega)
={1\over
1-\rho_\gamma(\omega)R(\omega)e^{2i\omega L}}.
\end{equation}
This factor contains only the cavity denominator.
On the other hand, the Green function formula contains the full Wronskian.
In a regular frequency window where the prefactor
$(i\omega-\gamma)A(\omega)e^{-i\omega L}$ in \eqref{eq:W_factor} is
holomorphic and nonzero, the two factors have the same poles in that window.
We define the full transfer factor by
\begin{equation}
\label{eq:K_full_def}
K_L(\omega)=-{1\over W_L(\omega)}.
\end{equation}
Using \eqref{eq:W_factor}, we obtain
\begin{equation}
\label{eq:K_full_cav_relation}
K_L(\omega)
=
-{e^{i\omega L}\over (i\omega-\gamma)A(\omega)}
K_L^{\rm cav}(\omega).
\end{equation}
Hence, in the regular window, $K_L$ and $K_L^{\rm cav}$ have the same
poles and differ only by a holomorphic nonvanishing prefactor.
The former is the full transfer factor in the source--to--observer formula, 
while the latter is the normalized cavity factor used to visualize the resonance comb.

Now consider the inhomogeneous problem
\begin{equation}
\label{eq:inhomogeneous}
\left(
-{d^2\over dx^2}+V(x)-\omega^2
\right)u(x,\omega)
=
S(x,\omega)
\end{equation}
on $[-L,\infty)$, with the Robin condition at $x=-L$ and the outgoing condition at $+\infty$.
We assume for simplicity that $S(x,\omega)$ is compactly supported in $x$ 
and holomorphic in $\omega$ in the frequency region considered here.
%%%% 
The Green function is
\begin{equation}
\label{eq:green_function}
G_L(x,y;\omega)
=
-{1\over W_L(\omega)}
\begin{cases}
\varphi_L(x,\omega)f_+(y,\omega), & x<y,\\
\varphi_L(y,\omega)f_+(x,\omega), & y<x.
\end{cases}
\end{equation}
Since $x$ is taken to the right of $\operatorname{supp} S$, the second line
of \eqref{eq:green_function} applies throughout the integral.
For $x$ to the right of both the supports of $V$ and $S$, we have
$y<x$ on the support of $S$ and $f_+(x,\omega)=e^{i\omega x}$. 
Hence, the Green representation gives
$$
u(x,\omega)
=
K_L(\omega)
\left\{
\int_{-L}^{\infty}
\varphi_L(y,\omega)S(y,\omega)\,dy
\right\}
e^{i\omega x}.
$$
Since $u(x,\omega)=\psi_\omega e^{i\omega x}$, we obtain
\begin{equation}
\label{eq:response_factorization}
\psi_\omega=K_L(\omega)D_L(\omega),
\end{equation}
where
\begin{equation}
\label{eq:D_def}
D_L(\omega)
=
\int_{-L}^\infty
\varphi_L(y,\omega)S(y,\omega)\,dy.
\end{equation}

The source factor depends on where and how the system is excited.
If the source is supported in the free part of the cavity, $-L<y<0$, then
\begin{equation}
\label{eq:phi_free_cavity}
\varphi_L(y,\omega)
=\cos\{\omega(y+L)\}+
{\gamma\over\omega}\sin\{\omega(y+L)\}.
\end{equation}
Thus, separated source components in the cavity can interfere directly in $D_L$.
If the source is near the barrier or outside the barrier, the same formula \eqref{eq:D_def} applies, 
but $\varphi_L$ also contains the barrier scattering information.
The pole locations are unchanged by these choices; the residues are not.
%%%% 
\begin{myprop}
\label{prop:visibility}
Let $\omega_j$ be a simple zero of $W_L$. Assume that $D_L$ is
holomorphic near $\omega_j$. Then the response $\psi_\omega$ has a simple
pole at $\omega_j$ if and only if
$$
D_L(\omega_j)\neq0.
$$
In this case the residue is
\begin{equation}
\label{eq:residue_formula}
\operatorname{Res}_{\omega=\omega_j}\psi_\omega
=
-{D_L(\omega_j)\over W_L'(\omega_j)}.
\end{equation}
If $D_L(\omega_j)=0$, then the pole is removable in the
source--to--observer response.
\end{myprop}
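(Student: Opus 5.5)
The plan is to read the proposition off the factorization \eqref{eq:response_factorization}, $\psi_\omega=K_L(\omega)D_L(\omega)=-D_L(\omega)/W_L(\omega)$ from \eqref{eq:K_full_def}, using only that $\omega_j$ is a simple zero of $W_L$ and that $D_L$ is holomorphic near $\omega_j$. First I will confirm that $W_L$ is holomorphic near $\omega_j$. By \eqref{eq:W_boundary}, $W_L(\omega)=f_+'(-L,\omega)-\gamma f_+(-L,\omega)$. The values $f_+(x,\omega)$ and $f_+'(x,\omega)$ at $x=-L$ are entire in $\omega$ by the analytic dependence argument of Section~\ref{sec:jost}, since the initial data at $x=a$ and the coefficients are entire. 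So $W_L$ is entire, and no division by $A$ or $i\omega-\gamma$ is involved.

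Since $\omega_j$ is a simple zero, I can write $W_L(\omega)=(\omega-\omega_j)h(\omega)$ with $h$ holomorphic near $\omega_j$ and $h(\omega_j)=W_L'(\omega_j)\neq0$. Then $h$ is nonvanishing on a small disk about $\omega_j$. On the punctured disk,
$$
\psi_\omega=-{D_L(\omega)\over(\omega-\omega_j)h(\omega)},
$$
and $-D_L/h$ is holomorphic on the full disk.

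The dichotomy now follows from the Laurent expansion. If $D_L(\omega_j)\neq0$, the numerator $-D_L/h$ is nonzero at $\omega_j$, so $\psi_\omega$ has a simple pole there. Its residue is the value of $-D_L/h$ at $\omega_j$, namely $-D_L(\omega_j)/W_L'(\omega_j)$, which is \eqref{eq:residue_formula}. If instead $D_L(\omega_j)=0$, write $D_L(\omega)=(\omega-\omega_j)g(\omega)$ with $g$ holomorphic. Then $\psi_\omega=-g/h$ near $\omega_j$, so the singularity is removable. In particular $\psi_\omega$ has no simple pole, which gives the converse direction of the equivalence.

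No step is a real obstacle; the only point needing care is that $W_L$ is holomorphic at $\omega_j$ regardless of whether the prefactor in \eqref{eq:W_factor} vanishes, so the argument must use \eqref{eq:W_boundary} and not the normalized denominator. I would also remark that $D_L$ is holomorphic under the standing assumption on $S$. This is because $\varphi_L(y,\omega)$ is entire in $\omega$, locally uniformly in $y$ on the compact support of $S$, so differentiation under the integral sign applies.
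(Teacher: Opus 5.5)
Your proposal is correct and follows essentially the same argument as the paper. You factor $W_L(\omega)=(\omega-\omega_j)G(\omega)$ with $G(\omega_j)=W_L'(\omega_j)\neq0$, read off the simple pole and its residue when $D_L(\omega_j)\neq0$, and cancel a factor $\omega-\omega_j$ from $D_L$ otherwise; your additional check via \eqref{eq:W_boundary} that $W_L$ is entire, independent of the normalized factorization, is left implicit in the paper.
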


\begin{proof}
Since $\omega_j$ is a simple zero of $W_L$, we can write
$$
W_L(\omega)=(\omega-\omega_j)G(\omega),
        \qquad G(\omega_j)=W_L'(\omega_j)\neq0,
$$
near $\omega_j$, where $G$ is holomorphic.
By \eqref{eq:response_factorization} and \eqref{eq:K_full_def},
$$
\psi_\omega
=-{D_L(\omega)\over (\omega-\omega_j)G(\omega)}.
$$
This has a simple pole exactly when $D_L(\omega_j)\neq0$, 
and then its residue is
$$
-{D_L(\omega_j)\over G(\omega_j)}
=
-{D_L(\omega_j)\over W_L'(\omega_j)}.
$$
If $D_L(\omega_j)=0$, then, since $D_L$ is holomorphic, the factor
$\omega-\omega_j$ divides $D_L(\omega)$ at least once near $\omega_j$.
Hence, the singularity of $\psi_\omega$ at $\omega_j$ is removable.
This proves the assertion.
\end{proof}
%%%% 
In a regular frequency window, we write
$$
P(\omega)=(i\omega-\gamma)A(\omega)e^{-i\omega L},
        \qquad
H_L(\omega)=1-Q_\gamma(\omega)e^{2i\omega L}.
$$
Then, $W_L=P H_L$, and $P$ is holomorphic and nonzero in the window under consideration.
At a simple zero $\omega_j$ of $H_L$, we have
\begin{equation}
\label{eq:residue_regular_window}
\operatorname{Res}_{\omega=\omega_j}\psi_\omega
={D_L(\omega_j)
\over
P(\omega_j)\{q'(\omega_j)+2iL\}}.
\end{equation}
Thus, away from exceptional normalization points, the residue is governed
by the source factor, with the main denominator factor being of order
$L$, up to nonzero Jost and wall prefactors.

\begin{rem}
Proposition~\ref{prop:visibility} is the elementary analytic mechanism behind spectral visibility.
The pole location is fixed by the homogeneous problem, 
but the residue contains the source factor.
Indeed, near a simple zero $\omega_j$ of $W_L$, the Taylor expansions give
$$
W_L(\omega)
=
W'_L(\omega_j)(\omega-\omega_j)
+
O((\omega-\omega_j)^2),
\qquad
D_L(\omega)
=
D_L(\omega_j)
+
O(\omega-\omega_j).
$$
Hence
$$
\psi_\omega
=
-
{D_L(\omega_j)\over W'_L(\omega_j)}
{1\over \omega-\omega_j}
+
\hbox{regular terms}.
$$
If $\omega_j=\Omega_j-i\Gamma_j$ with $\Gamma_j>0$, then for real $\omega$,
$$
|\omega-\omega_j|
=
\sqrt{(\omega-\Omega_j)^2+\Gamma_j^2}.
$$
Thus, if the regular background and the contributions of other nearby poles do not dominate
near $\Omega_j$, the resonant contribution gives a Lorentzian--type peak, with amplitude
peak size of order
$$
{|D_L(\omega_j)|\over |W'_L(\omega_j)|\,\Gamma_j}.
$$
Thus, a resonance may be weakly visible because the source factor is small, 
because the pole is far from the real axis, or because the background masks the peak.
If $D_L(\omega_j)=0$, the pole is canceled in the response.
\end{rem}

\begin{rem}
\label{rem:visibility_not_location}
The zeros of $W_L$ depend only on the operator and boundary conditions;
the source changes the residues.
Thus, $K_L^{\rm cav}$ displays the cavity comb, 
while $K_LD_L$ determines which part of that comb is visible for a particular excitation.
\end{rem}

%%%% 
\begin{rem}
For an idealized finite superposition of point sources in the free part of the cavity,
\begin{equation}
S(y,\omega)=\sum_m c_m(\omega)\delta(y-y_m),
        \qquad -L<y_m<0,
\end{equation}
we obtain, away from $\omega=0$,
\begin{equation}
D_L(\omega)=
\sum_m c_m(\omega)
\left[
\cos\{\omega(y_m+L)\}
+{\gamma\over\omega}\sin\{\omega(y_m+L)\}
\right].
\end{equation}
The delta functions are distributional idealizations of narrow smooth compactly supported pulses. 
The formula shows how destructive interference can make $D_L(\omega_j)$ small or zero.
Fixing $y_m$ while varying $L$ is different from fixing the distance
$d_m=y_m+L$ from the wall. In the latter case $y_m=-L+d_m$ depends on
$L$, and the phase factors in $D_L(\omega)$ have a different $L$--dependence.
\end{rem}

%%%% 
\section{Relation with echo time delay}
\label{sec:echo_delay}

In units with $c=1$, the cavity length is approximately $L$, so a wave packet
trapped between the wall and the potential barrier returns after a time
$\Delta t_{\rm echo}\sim 2L$.
For the local resonance family constructed in Theorem~\ref{thm:main_asymptotics},
one has $\Delta \Rew\omega\sim \pi/L$, and therefore
$\Delta t_{\rm echo}\,\Delta \Rew\omega\sim2\pi$.
Equivalently, in ordinary frequency, $\Delta f\sim 1/(2L)$ and
$\Delta t_{\rm echo}\sim1/\Delta f$.
This frequency--domain relation between the echo delay and the resonance
spacing is the one used in echo--spectrum searches and transfer--function
descriptions; see, for example,
Refs.~\refcite{ConklinHoldomRen2018,ConklinHoldom2019}.
The point of the present derivation is that the same relation follows
from the zeros of the transfer denominator.
It is a property of the long cavity, not a special feature of one chosen waveform.
More precisely, for real $x\in I$, we write
$$
q(x)=\log Q_\gamma(x)=a(x)+ib(x).
$$
The real part of the local resonance is determined, to leading order, by
the phase condition
$$
2Lx+b(x)\simeq2\pi n.
$$
The left hand side may be viewed as the total round--trip phase. Put
$$
\Theta(x)=2Lx+b(x).
$$
If $x$ and $x+\Delta x$ are neighboring approximate real parts, then
$$
\Theta(x+\Delta x)-\Theta(x)\simeq2\pi.
$$
Taylor's formula gives
$$
2\pi
\simeq
\{2L+b'(x)\}\Delta x
+
{1\over2}b''(x)(\Delta x)^2+\cdots.
$$
Since the leading spacing is of order $L^{-1}$, the quadratic term is
of lower order. Thus, to the first phase--corrected order,
$$
\Delta x
\simeq
{2\pi\over 2L+b'(x)}
=
{\pi\over L}
-
{\pi b'(x)\over 2L^2}
+O(L^{-3}).
$$
Hence the effective round--trip group delay is formally
$$
T_{\rm rt}(x)\simeq \Theta'(x)=2L+b'(x),
$$
and the local spacing is corrected by the frequency dependence of the
round--trip phase:
$$
\Delta \Rew\omega
\simeq
{2\pi\over 2L+b'(x)}.
$$
For the actual resonances, this phase correction should be understood
together with the $O(L^{-2})$ localization error in
Theorem~\ref{thm:main_asymptotics}. The leading relation is obtained when
the cavity length is large and the phase variation of the reflection
coefficient is a lower order correction.

%%%%
\section{Numerical illustration}
\label{sec:numerics}

\rev{\subsection{Compactly supported model}
\label{subsec:compact_numerics}}

\rev{We first give numerical illustrations for the compactly supported
barrier used in the rigorous analysis.
The purpose of this subsection is to verify the local resonance
construction and the source--factor mechanism under the assumptions
of Theorem~\ref{thm:main_asymptotics}.}
We take
$$
V(x)=
\begin{cases}
V_0\exp\left\{1-{a^2\over 4x(a-x)}\right\}, & 0<x<a,\\
0, & x\leq0 \hbox{ or } x\geq a.
\end{cases}
$$
Then $V\in C^\infty_0(\R)$, the support is contained in $[0,a]$, and
$V(a/2)=V_0$. Unless otherwise stated, we use
$$
L=60.0,\qquad \gamma=0.0,\qquad V_0=2.0,\qquad a=1.0,
$$
and display the real frequency interval
$$
0.30\leq \Rew\omega\leq2.00.
$$

For each nonzero complex frequency $\omega$, the outgoing Jost solution
was computed by integrating backwards from $x=a$ to $x=0$ with the outgoing
data in \eqref{eq:jost_outgoing}.
The coefficients $A(\omega)$ and $B(\omega)$ were then recovered from
the formulas in Section~\ref{sec:jost}.
We then set $R(\omega)=B(\omega)/A(\omega)$ and computed the zeros of
$$
F_L(\omega)
=
1-\rho_\gamma(\omega)R(\omega)e^{2i\omega L}.
$$
The numerical integration was done by an adaptive Runge--Kutta method.
The tolerance settings were kept fixed for all displayed runs, and the reported digits did not change when the integration tolerances were tightened.
The zeros were found by a Newton--type iteration for the real and imaginary parts of $F_L$, starting from the asymptotic centers
$$
\omega_n^{(0)}
=
{\pi n\over L}
+
{i\over 2L}
\log Q_\gamma\left({\pi n\over L}\right),
        \qquad
Q_\gamma(\omega)=\rho_\gamma(\omega)R(\omega).
$$
The branch of the logarithm was chosen by continuous phase unwrapping along the real interval.
The Newton iteration was stopped when the residual satisfied $|F_L(\omega)|<10^{-12}$.
This residual condition is used as a numerical stopping criterion, not as a proof of the absence of other zeros.
For $L=60.0$, the interval corresponds to $n=6,7,\ldots,38$, and all root searches converged from the above initial guesses.

As a numerical check of the regular--window assumption, we sampled the
real interval and the boundary of the rectangle
$$
U_{\rm num}
=
\{\,\omega=x+iy:\ 0.30\leq x\leq2.00,\ |y|\leq0.02\,\}.
$$
On the real interval we found $\min |A|=1.031$, $\min |Q_\gamma|=0.244$, and $\max |R|=0.916$.
On $\partial U_{\rm num}$ we found $\min |A|=1.028$, $\min |B|=0.246$, and $\max |R|=0.958$.
For the present value $\gamma=0$, the factor $i\omega-\gamma$ is also bounded away from zero on $U_{\rm num}$.
The numerical changes of argument of $A$ and $B$ along $\partial U_{\rm num}$ were zero.
This is only a diagnostic, not a proof of Assumption~\ref{assump:Q} or of the full nonvanishing condition \eqref{eq:full_prefactor_condition}.
It indicates, however, that the displayed region does not contain zeros or poles of the normalized round--trip coefficient and that the divided prefactor is regular in this numerical example.

Figure~\ref{fig:resonance_poles} compares the computed zeros with the
asymptotic centers. The agreement is already good for $L=60.0$, in
accordance with Theorem~\ref{thm:main_asymptotics}.

\begin{figure}[htbp]
\centering
\includegraphics[width=0.72\textwidth]{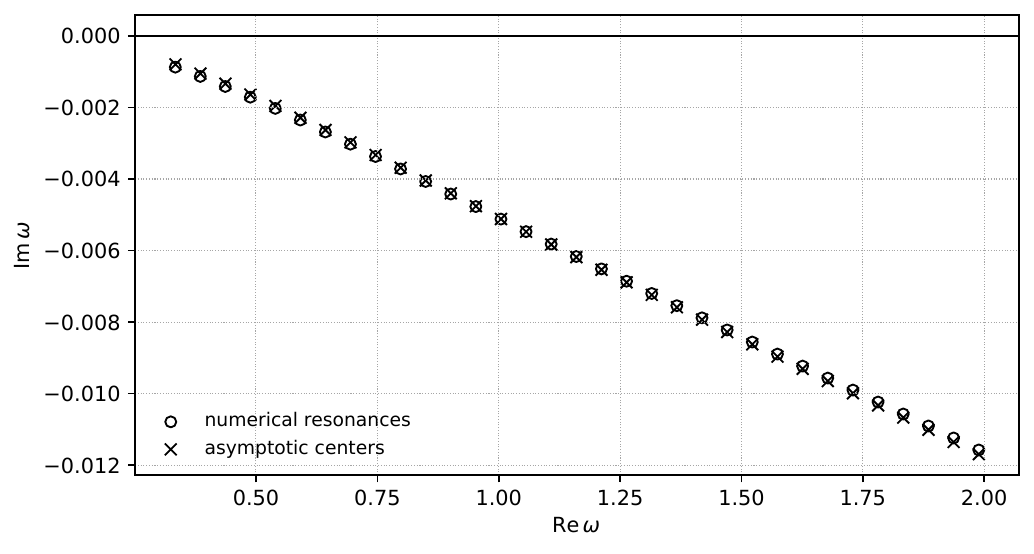}
\caption{
Numerically computed zeros and their asymptotic centers for the compactly
supported barrier with $L=60.0$, $\gamma=0.0$, $V_0=2.0$, and $a=1.0$.
The dots are the numerical zeros, and the crosses are the asymptotic
centers.
}
\label{fig:resonance_poles}
\end{figure}

The real parts of the resonances are almost equally spaced. This is
shown in Figure~\ref{fig:resonance_spacing}, where the numerical spacing
is compared with the leading value $\pi/L$. The small systematic
deviation is explained by the frequency dependence of the phase of $Q_\gamma$.
Indeed, if $q(x)=\log Q_\gamma(x)=a(x)+ib(x)$,
then, the real part of the asymptotic center is
$$
{\pi n\over L}-{b(\pi n/L)\over 2L}.
$$
Thus, the first finite--$L$ correction to the spacing is formally
$$
-{1\over 2L}
\left\{
b\left({\pi(n+1)\over L}\right)
-
b\left({\pi n\over L}\right)
\right\}
\simeq
-{\pi\over 2L^2}b'(x).
$$

\begin{figure}[htbp]
\centering
\includegraphics[width=0.72\textwidth]{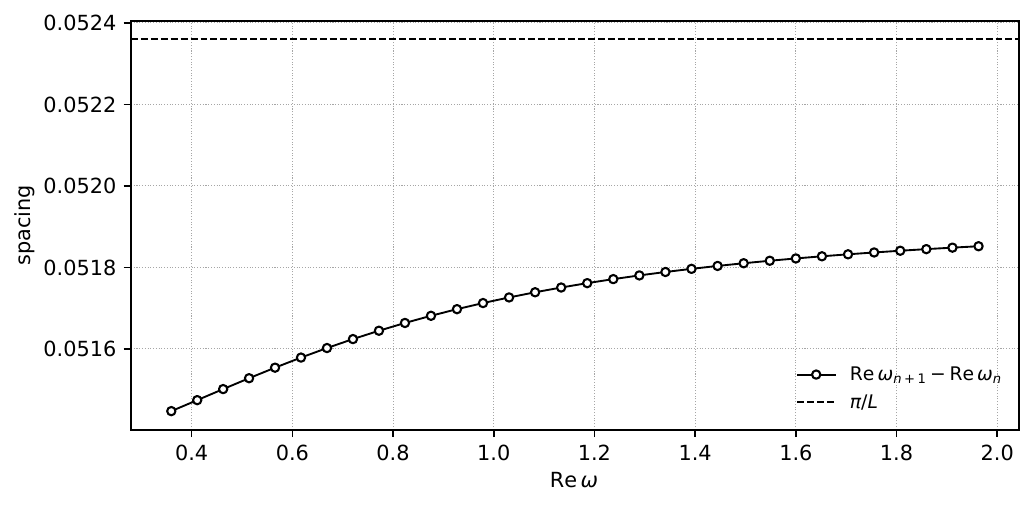}
\caption{
Spacing of the real parts of consecutive resonances.
The dashed line is $\pi/L$. The small deviation from this value is caused
by the frequency dependence of the phase of $Q_\gamma$.
}
\label{fig:resonance_spacing}
\end{figure}

We also checked the dependence on the cavity length. For
$L=40.0,60.0,80.0$, we used the same barrier parameters and the same real
frequency interval, and computed
$$
E_L=
\max_n
\left|
\omega_n-\omega_n^{(0)}
\right|,
$$
where the maximum is taken over the indices $n$ for which $\pi n/L$ lies
in the displayed interval. The results are shown in
Table~\ref{tab:L_dependence}. The approximate constancy of $L^2E_L$ is
consistent with the $O(L^{-2})$ error estimate in
Theorem~\ref{thm:main_asymptotics}.

\begin{table}[htbp]
\tbl{Dependence of the error
$E_L=\max_n|\omega_n-\omega_n^{(0)}|$ on the cavity length.
The column ``roots'' gives the number of roots in the displayed window.
\label{tab:L_dependence}}
{\begingroup
\renewcommand{\baselinestretch}{1}\selectfont
\renewcommand{\arraystretch}{0.9}
\setlength{\tabcolsep}{4pt}
\begin{tabular}{@{}cccc@{}}
\toprule
$L$ & roots & $E_L$ & $L^2E_L$ \\
\colrule
$40$ & $22$ & $8.23\times10^{-4}$ & $1.32$ \\
$60$ & $33$ & $3.70\times10^{-4}$ & $1.33$ \\
$80$ & $43$ & $2.09\times10^{-4}$ & $1.34$ \\
\botrule
\end{tabular}
\endgroup}
\end{table}

Next we visualize the same structure on a shifted real frequency line.
We use
$$
K_L^{\rm cav}(\omega)
=
{1\over
1-\rho_\gamma(\omega)R(\omega)e^{2i\omega L}},
        \qquad
\varepsilon=2.0\times10^{-4}.
$$
Figure~\ref{fig:transfer_comb} shows
$|K_L^{\rm cav}(\omega+i\varepsilon)|$. The peaks form an almost equally
spaced resonance comb, with spacing close to $\pi/L$.
\begin{figure}[htbp]
\centering
\includegraphics[height=0.27\textheight,keepaspectratio]{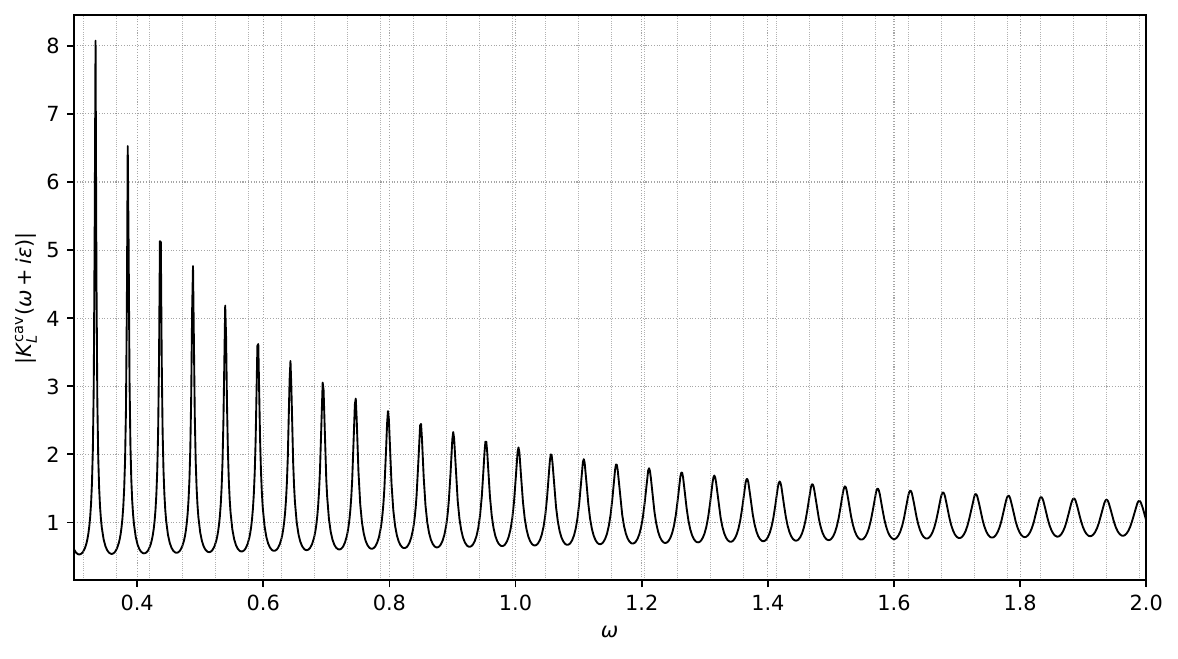}
\caption{
Shifted cavity transfer factor
$|K_L^{\rm cav}(\omega+i\varepsilon)|$ for
$L=60.0$, $\gamma=0.0$, and $\varepsilon=2.0\times 10^{-4}$.
The peaks form an almost equally spaced resonance comb, and the dotted
lines mark the leading centers $x_n=\pi n/L$.
}
\label{fig:transfer_comb}
\end{figure}
We finally illustrate the source factor. We first use two idealized
narrow pulses inside the free part of the cavity:
$$
S(y,\omega)
=
g(\omega)
\{\delta(y-y_1)+\delta(y-y_2)\}.
$$
Then \eqref{eq:D_def} gives
$$
D_L^{\delta}(\omega)
=
g(\omega)
\{\varphi_L(y_1,\omega)+\varphi_L(y_2,\omega)\}.
$$
For $-L<y_1,y_2<0$, and away from $\omega=0$,
$$
\varphi_L(y,\omega)
=
\cos\{\omega(y+L)\}
+
{\gamma\over \omega}\sin\{\omega(y+L)\}.
$$
Thus separated source components produce frequency--dependent
interference in the source factor. 
In the numerical plot we take
$$
g(\omega)
=
\exp\left\{-{(\omega-\omega_c)^2\over 2\sigma^2}\right\},
$$
with
$$
\omega_c=0.85,\qquad
\sigma=0.43,\qquad
y_1=-50.0,\qquad
y_2=-28.0.
$$
This is not an astrophysical initial perturbation, but a simple localized
source in the present half--line model.

Figure~\ref{fig:source_modulation} shows the resulting source modulation.
The comb remains, but its peak heights become strongly source dependent.
We also checked that replacing the delta functions by normalized smooth
bumps of width $h=0.4$ changes the normalized source factor by less than
$5.5\times10^{-3}$ on the displayed interval.

\begin{figure}[htbp]
\centering
\includegraphics[width=0.80\textwidth]{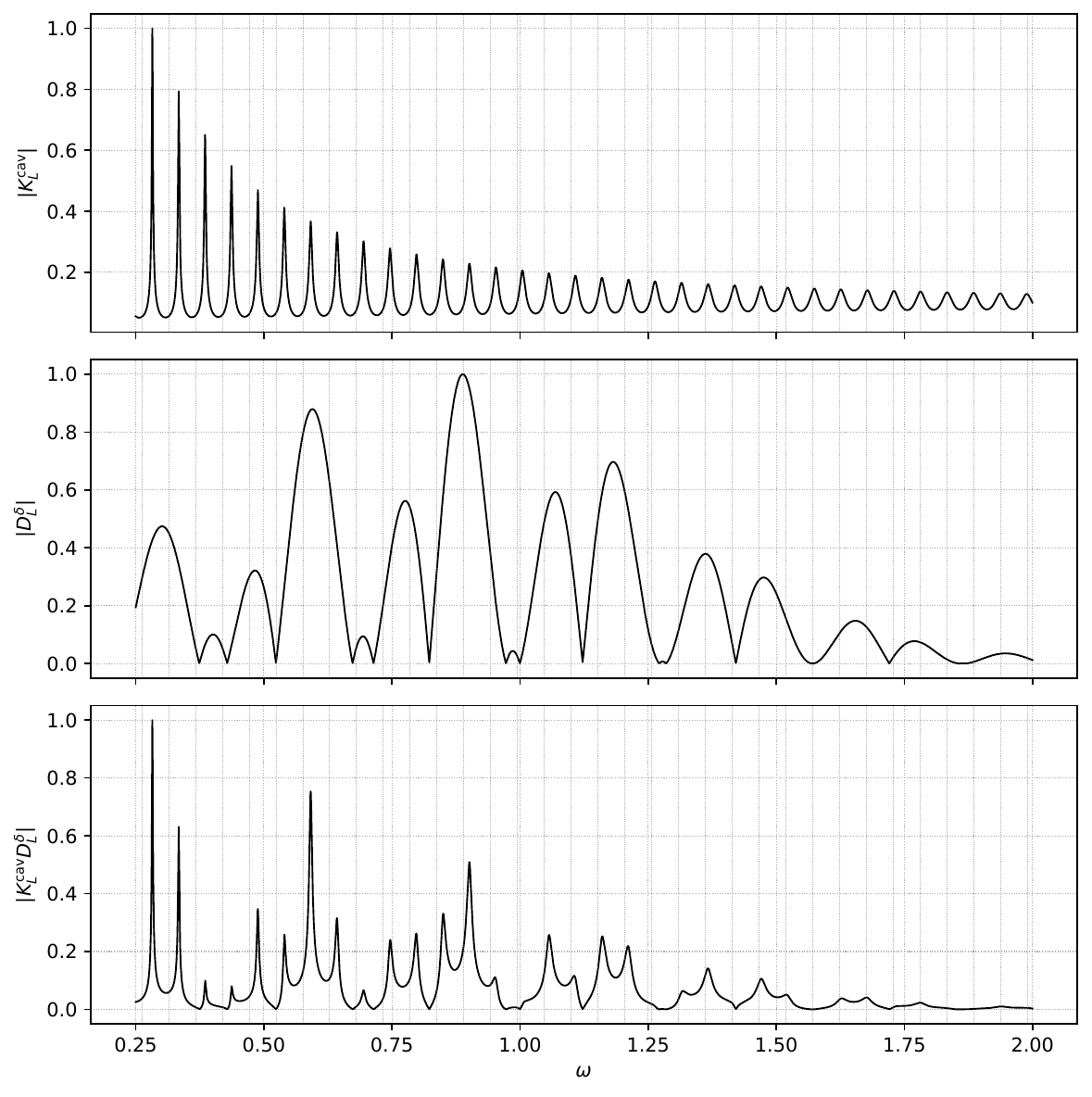}
\caption{
Source modulation of the resonance comb. The panels show the normalized
shifted cavity transfer factor $|K_L^{\rm cav}(\omega+i\varepsilon)|$,
the normalized shifted source factor
$|D_L^{\delta}(\omega+i\varepsilon)|$, and the normalized product
$|K_L^{\rm cav}(\omega+i\varepsilon)
D_L^{\delta}(\omega+i\varepsilon)|$. Each quantity is normalized by its
maximum in the displayed real frequency interval. The vertical dotted
lines mark the leading centers $x_n=\pi n/L$.
}
\label{fig:source_modulation}
\end{figure}

\rev{The source modulation in Figure~\ref{fig:source_modulation} is
related to, but should be distinguished from, the responses calculated
in Refs.~\refcite{Kokkotas1997,TominagaSaijoMaeda1999,FerrariKokkotas2000}.
Those works use concrete initial perturbations or particle sources in
relativistic stellar models.
Figure~\ref{fig:source_modulation}, in contrast, isolates only the
residue--level mechanism in the controlled half--line model.
Its purpose is to show that the homogeneous pole set does not by itself
determine the heights of the peaks in a source--to--observer response.}

We also give a direct numerical test of the pole--cancellation mechanism in Proposition~\ref{prop:visibility}.
This test is a functional--analytic illustration of the residue mechanism, not an astrophysical source model.
Let $\omega_j$ be the computed resonance with label $n=16$:
$$
\omega_j=0.849513-0.00406898\,i.
$$
Take two point sources at the same positions $y_1=-50.0$ and
$y_2=-28.0$, but now allow a complex relative coefficient,
$$
D_L^{\rm can}(\omega)
=
\varphi_L(y_1,\omega)
+
c_2\varphi_L(y_2,\omega),
        \qquad
c_2
=
-{\varphi_L(y_1,\omega_j)\over \varphi_L(y_2,\omega_j)}.
$$
For the present parameters this gives
$$
c_2=-1.22512-0.234108\,i.
$$
The common envelope $g(\omega)$ is omitted here, since any nonzero holomorphic factor does not affect the cancellation at $\omega_j$.
The result is shown in Table~\ref{tab:source_cancellation}.
The equal source has a nonzero source factor at $\omega_j$, while the canceling source has $D_L^{\rm can}(\omega_j)=0$ up to numerical roundoff.
This is the numerical version of the statement that the corresponding pole is removable in this source--to--observer response.

\begin{table}[htbp]
\tbl{Direct test of source--factor cancellation at the resonance
$\omega_j=0.849513-0.00406898\,i$.
%Here $D_L^{\rm eq}=\varphi_L(y_1,\omega)+\varphi_L(y_2,\omega)$ and
%$D_L^{\rm can}$ is defined above.
\label{tab:source_cancellation}}
{\begin{tabular}{@{}ccc@{}}
\toprule
source factor & value at $\omega_j$ & local shifted--line peak ratio \\
\colrule
$D_L^{\rm eq}$  & $|D_L^{\rm eq}(\omega_j)|=1.07539$ & $1$ \\
$D_L^{\rm can}$ & $|D_L^{\rm can}(\omega_j)|<2\times10^{-17}$ & $0.156$ \\
\botrule
\end{tabular}}
\end{table}

In the last column, the local peak ratio is the ratio of
$$
\max_{|\omega-\Rew\omega_j|\leq0.02}
\left|
K_L^{\rm cav}(\omega+i\varepsilon)D_L(\omega+i\varepsilon)
\right|
$$
to the same quantity for the equal source.
This number is not part of the theorem; it is affected by the shift,
the regular background, and nearby poles, and illustrates how a small
residue suppresses the peak on a shifted real frequency line.
%%%% 

\subsection{\textcolor{black}{Untruncated Regge--Wheeler comparison}}
\label{subsec:rw_numerics}

\begingroup
\color{black}
We next test the resonance approximation for a physically standard
black--hole perturbation potential.
We consider the axial Regge--Wheeler potential
\refcite{ReggeWheeler1957,Chandrasekhar1983},
$$
V_{\rm RW}(r)
=
\left(1-{2M\over r}\right)
\left\{
{\ell(\ell+1)\over r^2}
-
{6M\over r^3}
\right\},
$$
where $M$ is the Schwarzschild mass and $\ell$ is the angular
multipole number.  The tortoise coordinate is
$$
r_*
=
r+2M\log\left({r\over2M}-1\right).
$$
We write $x$ for $r_*$ after choosing the additive constant so that the
maximum of $V_{\rm RW}$ is at $x=0$.
The wall is placed at $x=-L$, and we impose the same Robin condition
as in the analytic model.
In the calculations below we set
$$
M=1,\qquad \ell=2,\qquad \gamma=0,
$$
and use the frequency window
$$
0.18\leq M\Rew\omega\leq0.34.
$$
This window is bounded away from the threshold and lies below the
maximum barrier frequency
$$
M\sqrt{V_{\rm RW}^{\max}}\simeq0.389.
$$

Let $f_+(x,\omega)$ denote the solution which is outgoing at spatial
infinity.
The direct Regge--Wheeler resonances are calculated as the zeros of
$$
F_{{\rm RW},L}(\omega)
=
f_+'(-L,\omega)-\gamma f_+(-L,\omega).
$$
At a large but finite outer point $x_{\rm right}$, the outgoing
condition is imposed using the first WKB logarithmic derivative.
More precisely, we use
$$
{f_+'(x_{\rm right},\omega)\over
 f_+(x_{\rm right},\omega)}
=
ik(x_{\rm right},\omega)
+
{V_{\rm RW}'(x_{\rm right})\over
4k(x_{\rm right},\omega)^2},
\qquad
k(x,\omega)^2=\omega^2-V_{\rm RW}(x),
$$
where the branch of $k$ is chosen continuously from $k=\omega$.
The irrelevant overall amplitude is fixed by
$$
f_+(x_{\rm right},\omega)=1.
$$
The solution is then integrated backwards to the wall by an adaptive
Runge--Kutta method.

On the horizon side, the same outgoing solution has the asymptotic form
$$
f_+(x,\omega)
=
A_{\rm RW}(\omega)e^{i\omega x}
+
B_{\rm RW}(\omega)e^{-i\omega x},
\qquad x\to-\infty.
$$
For the numerical extraction of these coefficients, we use
$x_{\rm left}/M=-60$ and the plane--wave formulas
$$
A_{\rm RW}(\omega)
=
{1\over2}e^{-i\omega x_{\rm left}}
\left\{
f_+(x_{\rm left},\omega)
+{f_+'(x_{\rm left},\omega)\over i\omega}
\right\},
$$
$$
B_{\rm RW}(\omega)
=
{1\over2}e^{i\omega x_{\rm left}}
\left\{
f_+(x_{\rm left},\omega)
-{f_+'(x_{\rm left},\omega)\over i\omega}
\right\}.
$$
Moving the extraction point to $x_{\rm left}/M=-80$ leaves the
displayed values unchanged.
We define
$$
R_{\rm RW}(\omega)
=
{B_{\rm RW}(\omega)\over A_{\rm RW}(\omega)}
$$
and compare the direct resonances with the first asymptotic centers
$$
\widehat{\omega}_n
=
{\pi n\over L}
+
{i\over2L}
\log\left\{
\rho_\gamma\left({\pi n\over L}\right)
R_{\rm RW}\left({\pi n\over L}\right)
\right\}.
$$
The phase of $R_{\rm RW}$ is unwrapped continuously along the real
frequency window to fix the branch of the logarithm.
We denote the direct zeros of $F_{{\rm RW},L}$ by
$\omega_n^{\rm RW}$ and obtain them by a Newton--type iteration in
$\Rew\omega$ and $\Imw\omega$, starting from
$\widehat{\omega}_n$.
The rigorous theorem of Section~\ref{sec:asymptotics} is not being
applied to the untruncated potential.
The centers $\widehat{\omega}_n$ are used here as a numerical
large--cavity approximation whose accuracy is tested directly.

Figure~\ref{fig:rw_resonance_comparison} compares the direct
Regge--Wheeler resonances with the asymptotic centers for
$L/M=40,60,80$.
The same nearly equally spaced comb is visible for all three cavity
lengths.
The deviations are larger in the imaginary parts than in the real
parts, but both decrease as the cavity length increases.
\endgroup

\begin{figure}[htbp]
\color{black}
\centering
\includegraphics[width=0.98\textwidth]
{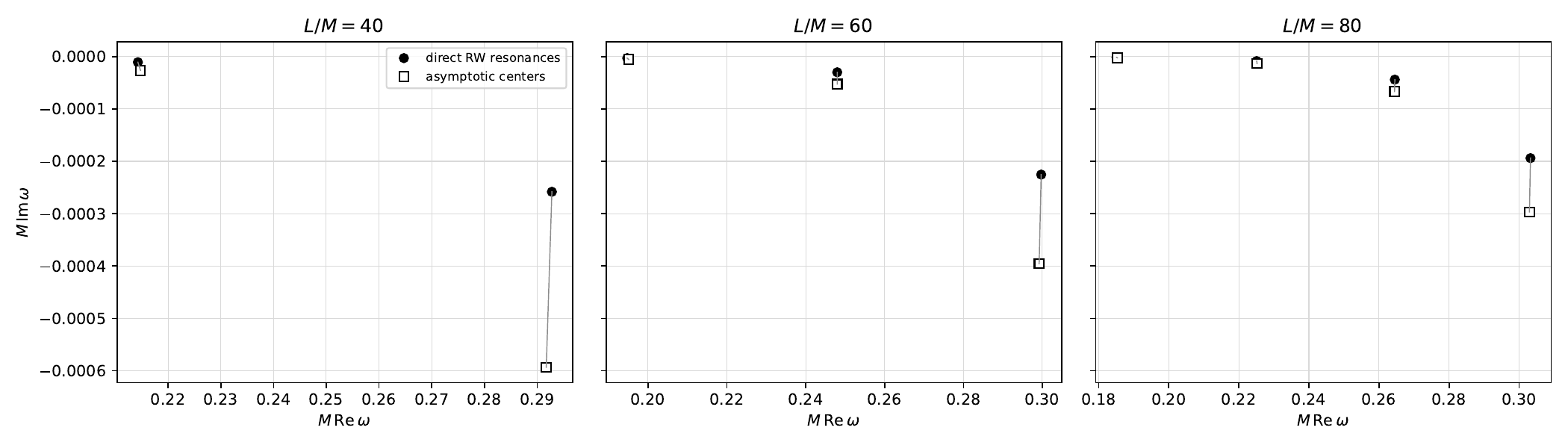}
\caption{\color{black}
Direct resonances for the untruncated axial Regge--Wheeler potential
and their first asymptotic centers.
Filled circles denote the zeros of $F_{{\rm RW},L}$, and open squares
denote $\widehat{\omega}_n$.
The gray segments connect corresponding modes.
The parameters are $M=1$, $\ell=2$, and $\gamma=0$.
}
\label{fig:rw_resonance_comparison}
\end{figure}

\begingroup
\color{black}
For each cavity length, we define
$$
E_L^{\rm RW}
=
\max_n
\left|
\omega_n^{\rm RW}-\widehat{\omega}_n
\right|,
$$
where the maximum is taken over the modes in the displayed frequency
window.
The results are given in Table~\ref{tab:rw_error}.
The quantity $L^2E_L^{\rm RW}$ remains between $1.79$ and $2.06$.
This approximate constancy is numerically consistent with an
$L^{-2}$ localization error also for the untruncated Regge--Wheeler
potential.
It is not asserted here as a theorem for the long--range potential.
\endgroup

\begin{table}[htbp]
\color{black}
\tbl{\color{black}
Comparison between the direct Regge--Wheeler resonances and the first
asymptotic centers.
The column ``roots'' gives the number of modes in the displayed
frequency window.
\label{tab:rw_error}}
{\begingroup
\color{black}
\renewcommand{\baselinestretch}{1}\selectfont
\renewcommand{\arraystretch}{0.9}
\setlength{\tabcolsep}{5pt}
\begin{tabular}{@{}cccc@{}}
\toprule
$L/M$ & roots & $E_L^{\rm RW}$ & $L^2E_L^{\rm RW}$ \\
\colrule
$40$ & $2$ & $1.1164\times10^{-3}$ & $1.7863$ \\
$60$ & $3$ & $5.4530\times10^{-4}$ & $1.9631$ \\
$80$ & $4$ & $3.2167\times10^{-4}$ & $2.0587$ \\
\botrule
\end{tabular}
\endgroup}
\end{table}

\begingroup
\color{black}
We checked the stability of the outgoing numerical condition by
repeating the $L/M=60$ calculation with
$$
x_{\rm right}/M=200,\quad250,\quad300.
$$
For the three displayed modes, the changes between
$x_{\rm right}/M=250$ and $300$ were
$$
1.66\times10^{-12},\qquad
5.71\times10^{-12},\qquad
4.13\times10^{-11},
$$
respectively.
The largest of these changes is more than seven orders of magnitude
smaller than $E_{60}^{\rm RW}$.
With the normalization $f_+(x_{\rm right},\omega)=1$, the absolute
residuals
$$
\left|F_{{\rm RW},L}(\omega_n^{\rm RW})\right|
$$
were at most of order $10^{-12}$.
Thus the differences in Table~\ref{tab:rw_error} are not explained by
the finite outer boundary used in the numerical integration.

For example, the real parts of the three $L/M=60$ resonances are
$$
0.194732653,\qquad
0.247991038,\qquad
0.299712432.
$$
Their consecutive spacings are
$$
0.053258385,\qquad0.051721393,
$$
while
$$
{\pi\over L}=0.052359878.
$$
The direct Regge--Wheeler resonances therefore exhibit the expected
nearly equally spaced comb, including the small phase--dependent
deviations from $\pi/L$.
The magnitudes of the imaginary parts increase with frequency,
consistently with increasing leakage through the exterior barrier.

We finally test source dependence directly in the Regge--Wheeler
problem.
Let $\varphi_L(x,\omega)$ be the solution satisfying the wall
normalization
$$
\varphi_L(-L,\omega)=1,
\qquad
\varphi_L'(-L,\omega)=\gamma.
$$
For a smooth source $S_j$, we calculate
$$
D_{L,j}^{\rm RW}(\omega)
=
\int_{-L}^{0}
\varphi_L(y,\omega)S_j(y)\,dy.
$$
Since the wall--normalized solution $\varphi_L$ and the outgoing
solution $f_+$ have Wronskian $F_{{\rm RW},L}$, we define the
Regge--Wheeler transfer factor at the numerical outer point by
$$
K_L^{\rm RW}(\omega)
=
-{1\over F_{{\rm RW},L}(\omega)}.
$$
We use the normalized Gaussian
$$
G(y;y_0,h)
=
{1\over\sqrt{2\pi}h}
\exp\left\{
-{(y-y_0)^2\over2h^2}
\right\}
$$
with $h=1.2$ and compare
$$
S_1(y)=G(y;-0.75L,h)
$$
with
$$
S_2(y)
=
{1\over2}
\left\{
G(y;-0.75L,h)
+
G(y;-0.45L,h)
\right\}.
$$
Thus the two profiles have the same integrated source strength.
The two source factors are normalized by one common scale, and the two
responses are also normalized by one common scale.
Their relative magnitudes are therefore retained.

Figure~\ref{fig:rw_source_modulation} shows the result for $L/M=60$.
For this plot, the functions are evaluated at
$\omega+i\varepsilon$, where
$\varepsilon=2.0\times10^{-4}$.
The poles of the homogeneous transfer factor are unchanged, but their
spectral weights depend strongly on the source.
Near the first resonance, the two separated source components interfere
destructively and almost remove the corresponding response peak.
Near the higher resonances, the relative excitation is different.
This gives a direct Regge--Wheeler realization of the residue--level
source dependence derived in Section~\ref{sec:response}.
\endgroup

\begin{figure}[htbp]
\color{black}
\centering
\includegraphics[width=0.76\textwidth]
{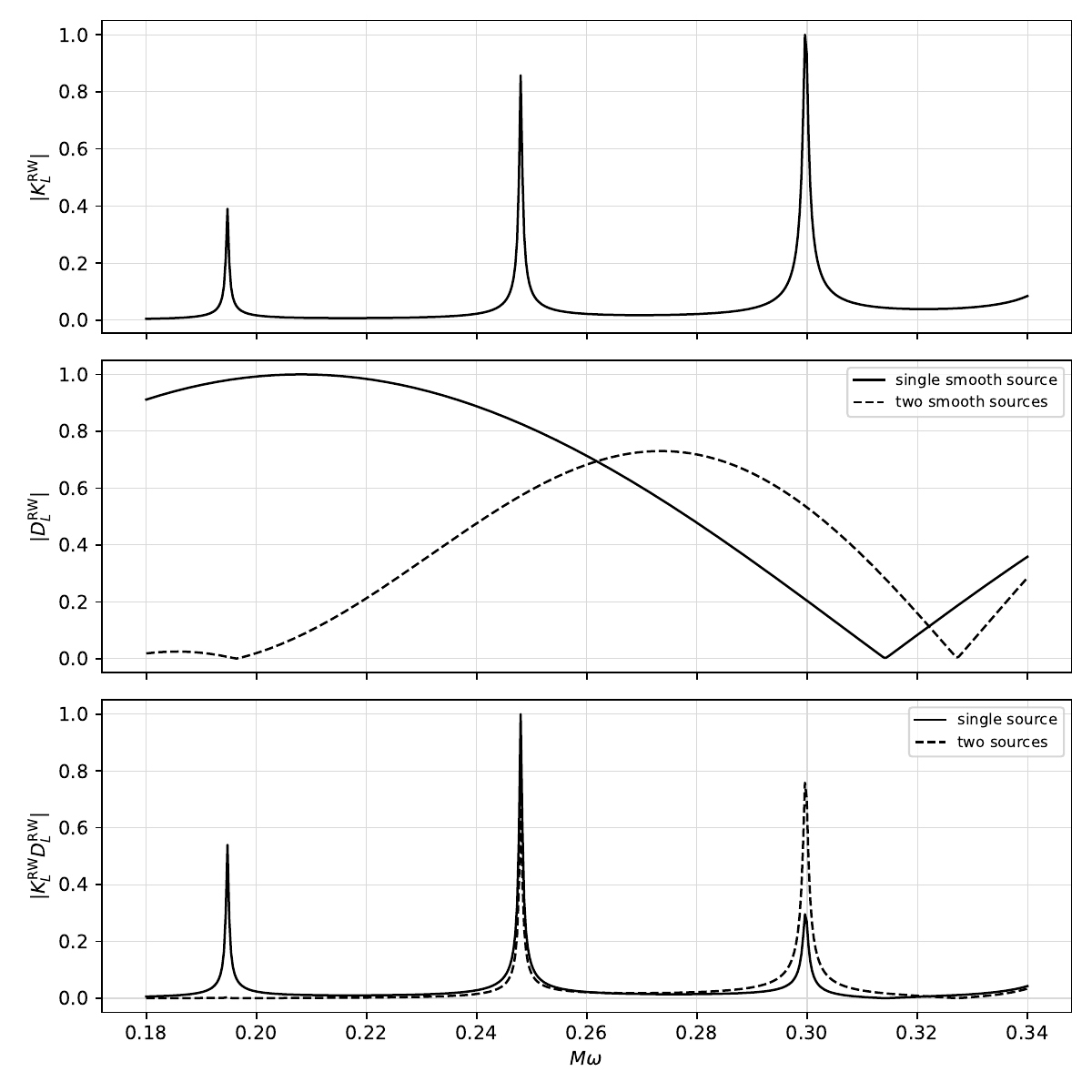}
\caption{\color{black}
Source modulation for the untruncated axial Regge--Wheeler potential
with $L/M=60$, $\ell=2$, $\gamma=0$, and
$\varepsilon=2.0\times10^{-4}$.
The upper, middle, and lower panels show
$|K_L^{\rm RW}(\omega+i\varepsilon)|$,
$|D_{L,j}^{\rm RW}(\omega+i\varepsilon)|$, and
$|K_L^{\rm RW}(\omega+i\varepsilon)
D_{L,j}^{\rm RW}(\omega+i\varepsilon)|$, respectively.
The transfer factor is normalized by its maximum, while the two source
factors are normalized by one common scale and the two products are
normalized by one common scale.
Solid and dashed lines denote single and double smooth source profiles
with the same integrated strength.
The pole locations are source independent, whereas their spectral weights
depend strongly on the source profile.
}
\label{fig:rw_source_modulation}
\end{figure}

\FloatBarrier
\section{Discussion}
\label{sec:discussion}

We have studied black--hole echo resonance spectra in a controlled half--line transfer--function model with an exterior compact barrier and an inner reflecting wall.
The Fabry--Perot type denominator is not proposed as a new physical mechanism; the contribution is to derive it from the Jost solution and the Robin wall condition, construct its zeros in a regular window with an $O(L^{-2})$ error bound, and identify them with the local resonances of the full half--line problem when the divided prefactor is nonzero.

The source factorization separates the homogeneous resonance spectrum from the spectral response to a particular excitation.
For the inhomogeneous problem, the response has the form $\psi_\omega=K_L(\omega)D_L(\omega)$, where $K_L$ is the Wronskian transfer factor and $D_L$ is a source factor.
Thus a homogeneous pole may give a prominent peak, a suppressed peak, or a canceled pole in a particular source--to--observer response.
In this sense, source dependence is a residue--level effect.
It should not be confused with detectability in gravitational--wave data, which also involves waveform modeling, detector noise, and statistical analysis.
The pole--cancellation example in Section~\ref{sec:numerics} is included only to illustrate this analytic mechanism, not as an astrophysical source model.

The local spacing obtained here is the frequency--domain counterpart of the echo delay.
A long cavity of length approximately $L$ gives the leading angular--frequency spacing $\Delta\Rew\omega\sim\pi/L$, or $\Delta f\sim1/(2L)$ in ordinary frequency.
This is equivalent to the usual time--domain relation $\Delta t_{\rm echo}\sim2L$.
Lower order corrections come from the frequency dependence of the round--trip phase.
Thus the spacing is determined by the cavity length and reflection phases, whereas the observed peak heights also depend on the source factor and the regular background.

The regular--window assumption is the main technical condition in the local theorem.
It excludes the threshold $\omega=0$ and zeros or poles of the normalized round--trip coefficient $Q_\gamma(\omega)=\rho_\gamma(\omega)R(\omega)$, so that a holomorphic branch of $\log Q_\gamma$ can be chosen.
When the normalized zeros are identified with full half--line resonances, the divided prefactor $(i\omega-\gamma)A(\omega)e^{-i\omega L}$ must also be holomorphic and nonzero.
Near exceptional points, such as zeros of $R(\omega)$ or poles of $R(\omega)=B(\omega)/A(\omega)$, the normalized equation is no longer the appropriate local object, and one should return to the undivided boundary equation.
The numerical checks in Section~\ref{sec:numerics} are therefore diagnostics for the displayed example, while the theorem relies on the stated holomorphic nonvanishing assumptions.

The imaginary parts of the local resonances are controlled by the logarithm of the total round--trip reflection coefficient.
In the Robin model, the wall reflection coefficient has unit modulus on the real axis, so damping comes from leakage through the exterior barrier.
The same local argument applies to a more general effective wall as soon as $Q_{\rm wall}(\omega)=\rho_{\rm wall}(\omega)R(\omega)$ is holomorphic and nonzero in the chosen complex neighborhood, as stated in Corollary~\ref{cor:general_wall}.
Physical admissibility of a prescribed wall reflectivity, for example passivity or causality, is a separate question and is not needed for the local resonance--comb theorem proved here.

\rev{The compact support of $V$ is the main simplification in the
rigorous part of the paper.
It gives exact free plane waves on both sides of the barrier and a
meromorphic reflection coefficient away from $\omega=0$.
Accordingly, the one--zero--per--cell theorem and its $O(L^{-2})$
localization estimate are proved only under the compact--support
assumption.

The numerical results of Section~\ref{subsec:rw_numerics} nevertheless
show that the principal conclusions are not artifacts of cutting off
the black--hole potential.
For the untruncated axial Regge--Wheeler potential, the direct
wall--plus--barrier resonances form the same nearly equally spaced comb,
the first asymptotic centers give quantitatively accurate
approximations, and the observed errors for $L/M=40,60,80$ are
consistent with $L^{-2}$ scaling.
The Wronskian response also continues to separate the homogeneous pole
locations from the source--dependent spectral weights.

These calculations do not provide a long--range analogue of
Theorem~\ref{thm:main_asymptotics}.
In particular, the analytic continuation, threshold behavior, and
uniform error estimates for the exact Regge--Wheeler or Zerilli
potentials require a separate analysis.
The numerical comparison instead shows which structural conclusions
survive in a physically standard perturbation model.}

\rev{The compactly supported examples and the Regge--Wheeler calculation
play different roles.
The former verify the theorem under its stated assumptions and
illustrate exact pole cancellation.
The latter test the physical robustness of the same mechanism for an
untruncated black--hole perturbation potential.
In both cases, the homogeneous denominator fixes the pole locations,
while the source factor controls their residues and the resulting peak
weights.
This distinction is consistent with earlier calculations of
source--dependent mode excitation in relativistic stellar models
\refcite{Kokkotas1997,TominagaSaijoMaeda1999,FerrariKokkotas2000}
and with the recent plunge--spectrum calculation of Nair
\refcite{Nair2026}.}

\rev{The physical origin of the cavity need not be unique.
For example, the double--peak potential found for black holes in
massive gravity can generate echoes even in the presence of an event
horizon
\refcite{DongStojkovic2021}.
By contrast, the plunge calculation of Ref.~\refcite{Nair2026}
provides a concrete excitation process for the low--frequency
resonance comb of a black--hole mimicker.
Together, these examples emphasize two separate points of the present
analysis: the homogeneous cavity structure determines the resonance
locations, while the physical source determines how strongly those
resonances appear in a particular spectrum.}

Rotating and superradiant situations are outside the present theorem.
In such problems the effective reflection coefficients may have a more complicated frequency dependence, and superradiant amplification can change the stability question.
If the modulus of an effective round--trip coefficient exceeds one in a frequency range, the location of the resonances and the possible appearance of instabilities have to be analyzed separately.
The present result should therefore be viewed as a controlled nonrotating benchmark for the resonance comb and the source--dependence mechanism.

\rev{The pseudospectral stability of the resulting non--self--adjoint
resonance problem is a separate question; see
Ref.~\refcite{JaramilloMacedoAlSheikh2021}.}

\rev{In summary, this paper gives a controlled analytic treatment of
black--hole echo resonance spectra.
It proves a local resonance comb with an $O(L^{-2})$ localization error
for a compactly supported barrier and shows how the source factor
controls the residues of the corresponding response.
Direct calculations for the untruncated axial Regge--Wheeler potential
show that the same comb, the first asymptotic centers, and the
source--dependent peak modulation remain quantitatively relevant in a
standard gravitational perturbation model.
The rigorous treatment of long--range black--hole potentials, as well
as extensions to rotating and superradiant systems, remains open.}

\end{document}